\theoremstyle{plain}
\newtheorem{lemma}{Lemma}
\newtheorem{theorem}{Theorem}
\newtheorem{proposition}{Proposition}
\renewcommand{\thefigure}{\arabic{figure}}
\begin{document}


\renewcommand{\baselinestretch}{1.2}


\markboth{\hfill{\footnotesize\rm WEIMING LI AND JIANFENG YAO} \hfill}
{\hfill {\footnotesize\rm LOCAL MOMENT ESTIMATION OF POPULATION SPECTRUM} \hfill}

\renewcommand{\thefootnote}{}
$\ $\par


\fontsize{10.95}{14pt plus.8pt minus .6pt}\selectfont
\vspace{0.8pc}
\centerline{\large\bf A LOCAL MOMENT ESTIMATOR OF THE SPECTRUM }
\vspace{2pt}
\centerline{\large\bf OF A LARGE DIMENSIONAL COVARIANCE MATRIX}
\vspace{.4cm}
\centerline{Weiming Li and Jianfeng Yao}
\vspace{.4cm}
\fontsize{9}{11.5pt plus.8pt minus .6pt}\selectfont

\footnote{
Weiming Li is Postdoctor, LMIB and School of Mathematics and Systems Science, Beihang University, Beijing, 100191, China. (E-mail: liwm601@gmail.com); and
Jianfeng Yao is Associate Professor, Department of Statistics and Actuarial Sciences,
The University of Hong Kong, Hongkong, China. (E-mail: jeffyao@hku.hk)
}

\begin{quotation}
\noindent {\it Abstract:}
This paper considers the problem of estimating the population spectral distribution from a sample covariance matrix in large dimensional situations. We generalize the contour-integral based method in \cite{M08a} and present a local moment estimation procedure. Compared with the original one, the new procedure can be applied successfully to models where the asymptotic clusters of sample eigenvalues generated by different population eigenvalues are not all separate. The proposed estimates are proved to be consistent. Numerical results illustrate the implementation of the estimation procedure and demonstrate its efficiency in various cases.\par

\vspace{9pt}
\noindent {\it Key words and phrases:}
Empirical spectral distribution, Large covariance matrix, Moment estimation, Population spectral distribution, Stieltjes transform.
\par
\end{quotation}\par


\fontsize{10.95}{14pt plus.8pt minus .6pt}\selectfont

\setcounter{chapter}{1}
\setcounter{equation}{0} 
\noindent {\bf 1. Introduction}

\noindent

Let ${\bf x}_{1},\dots,{\bf x}_{n}$ be a sequence of i.i.d. zero-mean random vectors in $\mathbb{R}^{p}$ or $\mathbb{C}^p$, with a common  population covariance  matrix $\Sigma_{p}$. When the population size $p$ is not negligible with respect to the sample size $n$,
modern random matrix theory indicates that the sample covariance matrix $S_n=\sum_{i=1}^n{\bf x}_i{\bf x}_i^*/n$
does not approach $\Sigma_p$. Therefore, classical statistical procedures based on an approximation of $\Sigma_p$ by $S_n$ become inconsistent  in  such  large dimensional situations.

More precisely, the  {\em spectral   distribution} (SD) $F^A$ of an  $m\times  m$ Hermitian matrix (or real symmetric)  $A$ is  the measure generated by its eigenvalues $\{\lambda^A_i \}$,
\[    F^{A} = \frac1m \sum_{i=1}^m \delta_{\lambda^A_i}~,
 \]  where $\delta_b$ denotes the Dirac point measure at $b$. Denote by $(\sigma_{i})_{ 1\le i\le p}$ the $p$ eigenvalues of $\Sigma_p$.  We are particularly  interested  in the following  SD
 \[  H_p := F^{\Sigma_p}= \frac{1}{p} \sum_{i=1}^p \delta_{\sigma_i}.
 \]
In large dimensional frameworks, both dimensions $p$ and $n$ will grow to infinity. It is then natural to assume that $H_p$ converges weakly to a limit $H$. Both the SD $H_p$ and its limit $H$ are referred as the  {\em population spectral distribution}  (PSD) of the observation  model.

The main observation is that for large dimensional data, the empirical SD (ESD) $F_n:=F^{S_n}$ of $S_n$ is far from the PSD $H_p$. Indeed, under reasonable assumptions, when both dimensions $p$ and  $n$ grow proportionally,  almost surely, the ESD $F_n$ will weakly converge to a deterministic distribution $F$, which in general has no explicit form but is linked to the PSD $H$ via the so-called Mar\v{c}enko-Pastur equation, see \cite{MP67,Silverstein95,SilversteinB95}, and Section 2.1.

A natural question  here is the  recovering of the PSD $H_p$ (or its limit $H$)
from the ESD $F_n$. This question has a central importance in
several popular statistical methodologies like principal component
analysis \citep{Johnstone01}
or factor  analysis that
all rely on efficient estimations of some population covariance matrices.

Recent works on this problem include \cite{KarE08}, where the author proposed a nonparametric approach by solving the Mar\v{c}enko-Pastur equation on the upper complex plane, and then obtained consistent estimates of $H$. \cite{RaoJ08} investigated the asymptotic distributions of the moments of the ESD $F_n$ and introduced a Gaussian likelihood to get consistent estimates of $H$. In the work of \cite{M08a}, each mass of a discrete PSD $H$ is represented by a contour integral under a certain eigenvalue splitting condition and consistent estimators of $H$ are then obtained. Recently, \cite{Bai2010} modified the approach in \cite{RaoJ08} and turned it to a fully moments based procedure. Moreover beyond consistency, the authors proved also  a central limit theorem for the estimator. \cite{Li11} synthesized both the optimization approach in \cite{KarE08} and the parametric setup in \cite{Bai2010}, where an important improvement is that they changed the optimization problem from the complex plane to the real line by considering the extension of the Stieltjes transform on the real line.

Among all the above contributions, the contour-integral based method in \cite{M08a} is well known for its high efficiency and easy computation. However, the method is limited to a small class of discrete PSDs where, in addition, the imposed eigenvalue splitting condition states that distinct population eigenvalues should generate non-overlapping clusters of sample eigenvalues. Note that this method has been recently employed for subspace estimation in a so-called ``information plus noise" model in \cite{Hachem11}.

Our purpose in this paper is to extend Mestre's method to a more general situation where the splitting condition may not be satisfied. For a discrete PSD $H$ with finite support on $\mathbb R^+$, it is always true that one separate interval of the support $S_F$ of the limiting SD (LSD) $F$ corresponds to only one atom of $H$ if the dimension ratio $c$ is close to zero (the splitting condition holds). When $c$ is increased gradually, adjacent intervals of $S_F$ become closer, and some of them may ultimately merge into a larger interval (the splitting condition fails). Such merged intervals thus corresponds to more than one atom of $H$, and establishing their relationship in such a situation gives birth to our local estimation method.

Our strategy is that we first divide the PSD $H$ into a number of sub-probability measures, $H_1,\ldots,H_m$, such that each $H_i$ corresponds to one separate interval of $S_F$. Then, we develop a method to approximate the moments of $H_i$.  An estimate of $H_i$ can be obtained by solving a system of moment equations. Collecting all these estimates finally produces an estimator of $H$. It will be shown that when $m$ is equal to the number of atoms of $H$ (no merged intervals at all), this estimator reduces to the one in \cite{M08a}; If in contrary $m=1$ (all intervals merged into a single one), the estimator is equivalent to the one in \cite{Bai2010}.

The rest of the paper is organized as follows. In the next section, we review some useful results from Random Matrix Theory and introduce the division of a PSD $H$ according to the separation of the corresponding LSD $F$. A fast algorithm to solve the associated moment equations is also given. In Section 3, we present the theoretical supports and the detailed procedure of our estimation. In Section 4, simulation experiments are carried out to compare our new estimator with the estimator in \cite{M08a} and the moment estimator in \cite{Bai2010}. Some conclusions and remarks are presented in Section 5.

\par

\vskip 0.4cm
\setcounter{chapter}{2}
\setcounter{equation}{0} 
\noindent {\bf 2. Limiting spectral distribution and division of a PSD $H$}

\vskip 0.4cm
\noindent {\bf 2.1 The Mar\v{c}enko-Pastur equation}

Recall that the Stieltjes transform of $G$, a measure supported on the real line, is defined as $$s_G(z)=\int\frac{1}{x-z}dG(x),\quad z\in \mathbb{C}^{+},$$ where $\mathbb{C}^{+}$ is the set of complex numbers with positive imaginary part.

Let $S_G$ be the support set of $G$ and $S_G^c$ its complementary set. For the developments in this paper, we need to extend the Stieltjes transform to $\mathbb C\setminus S_G$ by
$$
s(z)=
\begin{cases}
s^*(z^*)         &(z\in\mathbb C^-=\{z\in\mathbb C: \Im(z)<0\}),\\
\lim_{\varepsilon\rightarrow0^+}s(x+\varepsilon\rm{i})&(z=x\in\mathbb R\setminus S_G),
\end{cases}
$$
where $a^*$ denotes the complex conjugate of $a$. The existence of the limit in the second term follows from the dominated convergence theorem.

Denote by $\lambda_1\leq\cdots\leq\lambda_p$ the eigenvalues of the sample covariance matrix $S_n$. Then the ESD $F_n$ of $S_n$ is
$$F_n=\frac{1}{p}\sum_{i=1}^p\delta_{\lambda_i},$$
whose Stieltjes transform is
\begin{eqnarray*}
s_n(z)=\int\frac{1}{x-z}dF_n(x)
=\frac{1}{p}\sum_{i=1}^p\frac{1}{\lambda_i-z}.
\end{eqnarray*}

Next, we present a convergence result of $F_n$ in \cite{Silverstein95} which is the basis of our estimation method in the next section.

\begin{lemma}\label{lema1}
Suppose that the entries of $X_n (p\times n)$ are complex random variables which are independent for each $n$ and identically distributed for all $n$, and satisfy ${\rm E}(x_{11})=0$ and ${\rm E}(|x_{11}|^2)=1.$ Also, assume that $T_n$ is a $p\times p$ random Hermitian nonnegative definite matrix, independent of $X_n$, and the empirical distribution $F^{T_n}$ converges almost surely to a probability measure $H$ on $[0,\infty)$ as $n\rightarrow\infty$. Set $B_n=T_n^{1/2}X_nX_n^{*}T_n^{1/2}/n.$ When $p=p(n)$ with $p/n\rightarrow c>0 $ as $n\rightarrow\infty$, then, almost surely, the empirical spectral distribution $F^{\rm B_n}$ converges in distribution, as $n\rightarrow\infty,$ to a (non-random) probability measure $F$, whose Stieltjes transform $s=s(z)$ is a solution to the equation
\begin{eqnarray}
s=\int\frac{1}{t(1-c-czs)-z}dH(t).\label{equ2.1}
\end{eqnarray}
The solution is also unique in the set $\{s\in {\mathbb C}: -(1-c)/z+cs\in{\mathbb C^+}\}.$
\end{lemma}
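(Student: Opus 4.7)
The plan is to prove pointwise convergence of the Stieltjes transform $s_n(z)=p^{-1}\mathrm{tr}(B_n-zI)^{-1}$ almost surely for each $z\in\mathbb{C}^+$, which is equivalent to weak convergence of $F^{B_n}$ to a measure whose Stieltjes transform is the claimed limit. Throughout I would condition on $T_n$, thus treating it as deterministic with $F^{T_n}\to H$, and perform the standard truncation/centralization/rescaling of the entries $x_{ij}$ down to uniformly bounded variables; this perturbs $F^{B_n}$ only negligibly in L\'evy distance.

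\textbf{Concentration.} The first substantive step is to show $s_n(z)-\mathbb{E}s_n(z)\to 0$ almost surely. Let $\mathcal{F}_k$ be the $\sigma$-field generated by the first $k$ columns of $X_n$ and expand $s_n-\mathbb{E}s_n=\sum_{k=1}^n(\mathbb{E}_k-\mathbb{E}_{k-1})s_n$ as a martingale difference sum. The rank-one resolvent inequality $|\mathrm{tr}(B_n-zI)^{-1}-\mathrm{tr}(B_n^{(k)}-zI)^{-1}|\le |\Im z|^{-1}$, where $B_n^{(k)}$ drops the $k$-th column, bounds each increment deterministically by $2/(p|\Im z|)$. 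Burkholder's inequality then yields $\mathbb{E}|s_n-\mathbb{E}s_n|^4=O(n^{-2})$, and Borel-Cantelli closes this step.

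\textbf{Deterministic limit.} To identify the limit of $\mathbb{E}s_n(z)$, apply the Sherman-Morrison identity column by column to $D_n(z):=(B_n-zI)^{-1}$ to obtain
\begin{equation*}
zD_n(z) = -I + \frac{1}{n}\sum_{k=1}^n \frac{T_n^{1/2}x_kx_k^*T_n^{1/2}D_n^{(k)}(z)}{1+n^{-1}x_k^*T_n^{1/2}D_n^{(k)}(z)T_n^{1/2}x_k}.
\end{equation*}
Concentration of quadratic forms gives $n^{-1}x_k^*Ax_k - n^{-1}\mathrm{tr}\,A\to 0$ in $L^2$ for operator-norm-bounded $A$. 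Replacing the quadratic forms by their traces and $D_n^{(k)}$ by $D_n$ (at cost $O(1/n)$), then multiplying by $T_n^{1/2}$ on both sides and extracting the normalized trace of $D_n$ yields the approximate equation
\begin{equation*}
\mathbb{E}s_n(z) = \int\frac{1}{t\bigl(1-c_n-c_n z\,\mathbb{E}s_n(z)\bigr)-z}\,dF^{T_n}(t) + o(1).
\end{equation*}
Passing to the limit and using $F^{T_n}\Rightarrow H$ together with $c_n\to c$ delivers (2.1).

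\textbf{Uniqueness and main obstacle.} For uniqueness in the set $\{s:-(1-c)/z+cs\in\mathbb{C}^+\}$, introduce the companion transform $\underline{s}(z):=-(1-c)/z+c\,s(z)$, which is the Stieltjes transform of the LSD of $n^{-1}X_n^*T_nX_n$; rewriting (2.1) in terms of $\underline{s}$ produces the explicit inverse $z=-1/\underline{s}+c\int t/(1+t\underline{s})\,dH(t)$, and a direct analytic argument shows that this map from $\mathbb{C}^+$ to $\mathbb{C}^+$ is injective, pinning $s$ down uniquely. The main obstacle I anticipate is the careful bookkeeping in the Sherman-Morrison step: several small-error terms (quadratic form concentration, $D_n^{(k)}\mapsto D_n$, and $c_n\to c$) must be controlled simultaneously and uniformly on compact subsets of $\mathbb{C}^+$, since only such uniform control is enough to convert pointwise convergence of Stieltjes transforms into the claimed weak convergence of $F^{B_n}$.
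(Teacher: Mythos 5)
The paper does not prove this statement at all: Lemma 1 is quoted verbatim from Silverstein (1995), so there is no internal proof to compare against. Your sketch reproduces the standard argument of that reference (and of Bai--Silverstein) correctly in its main architecture: truncation and conditioning on $T_n$, almost-sure concentration of $s_n(z)$ via a martingale-difference decomposition with the rank-one resolvent bound and Burkholder plus Borel--Cantelli, identification of the limit through the column-wise Sherman--Morrison expansion and concentration of quadratic forms, and uniqueness via the companion transform $\underline{s}$ and the explicit inverse map $z=-1/\underline{s}+c\int t/(1+t\underline{s})\,dH(t)$ (the ``direct analytic argument'' is the usual subtraction of the two fixed-point equations followed by Cauchy--Schwarz and the imaginary-part inequality $c\int t^2|\underline{s}|^2/|1+t\underline{s}|^2\,dH<1$ for $z\in\mathbb{C}^+$). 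Two small remarks: to close the identification step you should make explicit that one extracts subsequential limits of $s_n$ (bounded on compacts of $\mathbb{C}^+$, hence normal families apply) and uses the uniqueness statement to show every subsequential limit coincides; and your worry about needing uniformity on compacts is overcautious --- pointwise almost-sure convergence on a countable set of $z$ with an accumulation point in $\mathbb{C}^+$ already suffices, by Vitali's theorem, to upgrade to local uniform convergence and hence to weak convergence of $F^{B_n}$ to the probability measure $F$.
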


It will be more convenient to use a companion distribution $\underline{F}_n=(1-p/n)\delta_0+(p/n)F_n$ with Stiletjes transform
\begin{equation*}
\underline{s}_n(z)=-\frac{1-p/n}{z}+\frac{p}{n}s_n(z)=-\frac{1-p/n}{z}+\frac{1}{n}\sum_{i=1}^p\frac{1}{\lambda_i-z}.
\end{equation*}
The corresponding limit is $\underline{s}(z)=-(1-c)/z+cs(z)$ and it satisfies the following important equation which is a variant of Equation \eqref{equ2.1},
\begin{eqnarray}
z=-\frac{1}{\underline{s}}+c\int\frac{t}{1+t\underline{s}}dH(t).\label{equ2.3}
\end{eqnarray}
Both Equation \eqref{equ2.1} and Equation \eqref{equ2.3} are referred as the Mar\v{c}enko-Pastur equation.

Since the convergence in distribution of probability measures implies the pointwise convergence of the associated Stieltjes transforms, by Lemma \ref{lema1}, $\underline{s}_n(z)$ converges to $\underline{s}(z)$ almost surely, for any $z\in\mathbb C\setminus\mathbb R$. In \cite{SilversteinC95}, the convergence is extended to $S_F\setminus\{0\}$, and thus we conclude that for sufficiently large $n$, $\underline{s}_n(z)$ converges to $\underline{s}(z)$ almost surely for every $z\in\mathbb C\setminus (S_F\cup\{0\})$.

\vskip 0.4cm
\noindent {\bf 2.2 Division of a PSD $H$}

As mentioned in Introduction, our new method relies on a division of a PSD $H$ according to the separation of the corresponding LSD $F$. Suppose that the support $S_F$ of $F$ consists of $m$ ($m\geq1$) disjoint compact intervals, $S_1=[x_1^-,x_1^+],\ldots,S_m=[x_m^-,x_m^+]$  sorted in an increasing order.
Choose $\delta_i^-,\delta_i^+$ ($i=1,\ldots,m$) satisfying
\begin{equation}\label{equ2.4}
\delta_1^-<x_1^-<x_1^+<\delta_1^+<\delta_2^-<\cdots<\delta_{m-1}^+<\delta_m^-<x_m^-<x_m^+<\delta_m^+.
\end{equation}
Notice that when $z=x$ is restricted to $S_F^c$, $u(x)=-1/\underline{s}(x)$ is monotonically increasing and takes values in $S_H^c$ \citep{SilversteinC95}.
We have then
$$u(\delta_1^-)<u(\delta_1^+)<u(\delta_2^-)
<\cdots<u(\delta_{m-1}^+)<u(\delta_m^-)<u(\delta_m^+)$$
and
$$S_H\subset\bigcup_{i=1}^m\left[u(\delta_i^-),u(\delta_i^+)\right].$$
Consequently, we can match each compact interval of $S_F$ with a disjoint part of $S_H$ by
\begin{equation}\label{equ2.5}
S_i\rightarrow S_H\cap[u(\delta_i^-),u(\delta_i^+)], \quad i=1,\ldots,m,
\end{equation}
and hence, the PSD $H$ admits a division as follows:
$$H_i(A)=\int_{[u(\delta_i^-), u(\delta_i^+)]\cap A}dH,\quad A\in\mathcal B,\quad i=1,\ldots,m,$$
where $\mathcal B$ is the class of Borel sets of $\mathbb R$. Obviously, $\sum_{i=1}^mH_i=H$.

The map in \eqref{equ2.5} can be easily found out from the graph of $u(x)$ on $S_F^c$. Two typical representations of the graph are shown in Figure \ref{fig1}. The figures show that when $c<1$, each compact interval of $S_F$ corresponds to masses of $H$ that fall within this interval. But this is not true when $c>1$ as shown in the right panel of Figure \ref{fig1} where the mass 1 falls outside the interval $[x_1^-,x_1^+]$.

\begin{figure}
\begin{center}
\begin{minipage}[t]{0.5\linewidth}
\includegraphics[width=2.5in,height=1.8in]{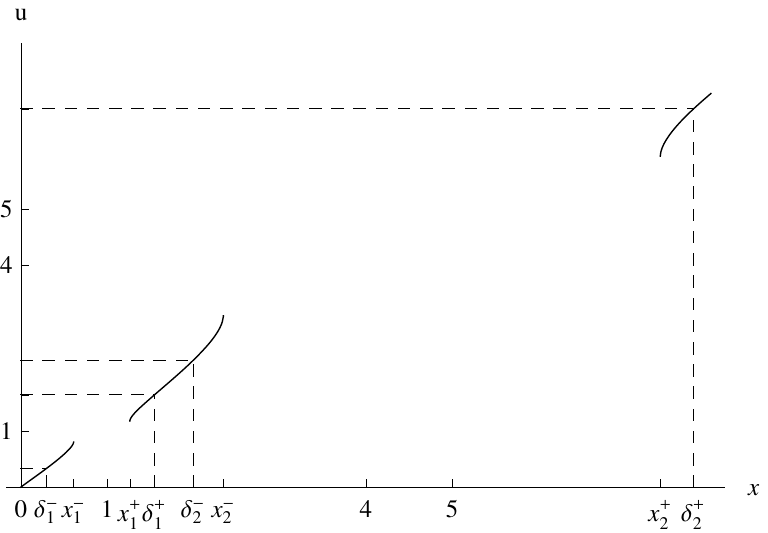}
\end{minipage}%
\begin{minipage}[t]{0.5\linewidth}
\includegraphics[width=2.5in,height=1.8in]{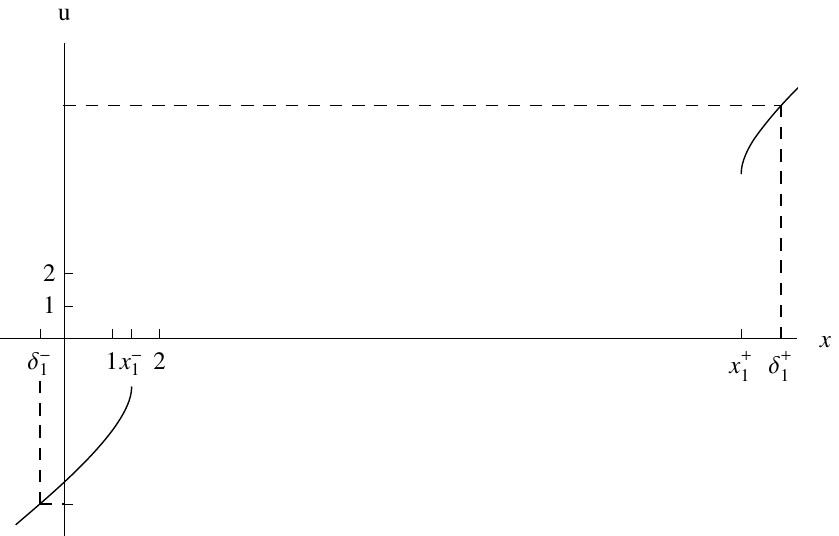}
\end{minipage}
\caption{The curves of $u(x)$ on $S_F^c\cap\mathbb R^+$ with $H_1=0.3\delta_1+0.4\delta_4+0.3\delta_5$ and $c_1=0.1$ (left), and $H_2=0.5\delta_1+0.5\delta_2$ and $c_2=4$ (right).}
\label{fig1}
\end{center}
\end{figure}

\vskip 0.4cm
\noindent {\bf 2.3 Moments of a discrete measure}

Let be a discrete measure $G=\sum_{i=1}^km_i\delta_{b_i}$ where $b_1<\cdots<b_k$ are $k$ masses with respective positive  weights $\{m_i\}$. Here, we don't assume $\sum m_i=1$ and $G$ can be a sub-probability measure. Define the $l$-th moment of $G$ as
$$\gamma_{l}=\sum_{i=1}^km_ib_i^l,\quad l=0,1,\ldots,$$
and the $N$-th Hankel matrix related to $G$ as
\begin{eqnarray*}
\Gamma(G,N)= \left(\begin{matrix}
\gamma_0&\gamma_1&\cdots&\gamma_{N-1}\\
\gamma_1&\gamma_2&\cdots&\gamma_{N}\\
\vdots&\vdots&&\vdots\\
\gamma_{N-1}&\gamma_{N}&\cdots&\gamma_{2N-2}
\end{matrix}\right).
\end{eqnarray*}

\begin{proposition}\label{pro1}
The Hankel matrix $\Gamma(G,k)$ is positive definite, and its determinant is
\begin{equation}
\det(\Gamma(G,k))=\prod_{i=1}^km_i\prod_{ 1\leq i<j\leq k}(b_i-b_j)^2.\label{equ2.6}\\
\end{equation}
Furthermore,
\begin{equation}
\det(\Gamma(G,N))=0,\quad N>k\label{equ2.7}.
\end{equation}
\end{proposition}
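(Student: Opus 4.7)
The plan is to establish both parts of the proposition at once through a single factorization of the Hankel matrix into Vandermonde factors. Define the $k\times N$ matrix $V$ with entries $V_{\ell,j}=b_\ell^{\,j-1}$ for $1\le\ell\le k$ and $1\le j\le N$, and let $D=\operatorname{diag}(m_1,\ldots,m_k)$. A direct computation shows
\begin{equation*}
(V^{\!\top} D V)_{ij}=\sum_{\ell=1}^k m_\ell\, b_\ell^{\,i+j-2}=\gamma_{i+j-2},
\end{equation*}
so $\Gamma(G,N)=V^{\!\top} D V$. This identity is the engine for everything below.

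For the first claim, specialize to $N=k$, in which case $V$ is the $k\times k$ Vandermonde matrix. Since the $b_i$ are distinct, $V$ is invertible, and since the $m_i$ are positive, $D$ is positive definite; consequently $V^{\!\top} D V$ is positive definite. The determinant follows from the Vandermonde formula:
\begin{equation*}
\det\Gamma(G,k)=(\det V)^{2}\det D=\Bigl(\prod_{1\le i<j\le k}(b_j-b_i)\Bigr)^{\!2}\prod_{i=1}^k m_i,
\end{equation*}
which (up to the irrelevant sign inside the square) is exactly \eqref{equ2.6}.

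For the second claim, take $N>k$. Then $V$ is $k\times N$ with rank at most $k$, so the $N\times N$ product $V^{\!\top} D V$ has rank at most $k<N$, and therefore $\det\Gamma(G,N)=0$, proving \eqref{equ2.7}.

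There is no real obstacle here; the only point that requires a brief sanity check is the index bookkeeping in the factorization (ensuring that the indices $i+j-2$ on the Hankel side match the powers produced by $V^{\!\top} D V$), and the verification that when $N=k$ the matrix $V$ is genuinely the classical Vandermonde matrix to which the standard determinant formula applies.
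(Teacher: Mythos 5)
Your proof is correct and follows essentially the same route as the paper: the factorization $\Gamma(G,k)=BMB^{T}$ with a Vandermonde matrix $B$ and the diagonal weight matrix, from which the determinant formula and positive definiteness follow since the $b_i$ are distinct and the $m_i$ positive. The only difference is that you extend the factorization to a rectangular $k\times N$ Vandermonde factor to get $\operatorname{rank}\Gamma(G,N)\le k$ and hence \eqref{equ2.7}, making explicit the step the paper dismisses as a ``direct calculation''; this is a welcome clarification rather than a different method.
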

\begin{proof}
Write $M=\text{diag}(m_1,\ldots,m_k)$ a diagonal matrix, and
\begin{eqnarray*}
B= \left(\begin{matrix}
1&1&\cdots&1\\
b_1&b_2&\cdots&b_k\\
\vdots&\vdots&&\vdots\\
b_1^{k-1}&b_2^{k-1}&\cdots&b_k^{k-1}
\end{matrix}\right)
\end{eqnarray*}
which is a square Vandermonde matrix whose determinant is well known to be $\prod_{ 1\leq i<j\leq k}(b_j-b_i)$. From this and the fact that $\Gamma(G,k)=BMB^{T}$, we get Equation \eqref{equ2.6}.

Based on the above conclusion, Equation \eqref{equ2.7} and the positive definiteness of $\Gamma(G,k)$ can be verified by a direct calculation.
\end{proof}

Our aim here is to find an efficient inversion formula to these moment equations and the formula will be on the basis of our inference procedure below. Define a degree-$k$ polynomial $P(x)$ as
\begin{eqnarray*}
P(x)=\prod_{i=1}^k(x-b_i)=\sum_{i=0}^{k}c_ix^i,\quad c_{k}=1.
\end{eqnarray*}
Then, the coefficients $c_i$'s of $P(x)$ and the moments $\gamma_i$'s of $G$ have the following relationship.

\begin{proposition}\label{pro2}
Let ${\bf c}= (c_0,\ldots, c_{k-1})^\prime$ and ${\boldsymbol\gamma}= (\gamma_{k},\ldots, \gamma_{2k-1})^\prime$. Then,
$$\Gamma(G,k)\cdot\bf c+{\boldsymbol\gamma}=0.$$
\end{proposition}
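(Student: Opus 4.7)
The plan is to exploit the fact that the $b_j$'s are exactly the roots of $P(x)$, so $P(b_j)=0$ gives a linear relation among the powers $1,b_j,\ldots,b_j^k$ that can be fed into the moment sums.

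More precisely, I would first rewrite $P(b_j)=0$ as $\sum_{i=0}^{k-1} c_i b_j^i = -b_j^k$ for every $j=1,\ldots,k$, isolating the leading term (which uses $c_k=1$). Then I would compute the $r$-th entry of $\Gamma(G,k)\mathbf{c}$ for $r=0,1,\ldots,k-1$ directly from the definitions: since the $(r,i)$-entry of $\Gamma(G,k)$ is $\gamma_{r+i}=\sum_{j=1}^k m_j b_j^{r+i}$, interchanging the two finite sums gives
\begin{equation*}
\bigl(\Gamma(G,k)\mathbf{c}\bigr)_r=\sum_{i=0}^{k-1} c_i \sum_{j=1}^k m_j b_j^{r+i}=\sum_{j=1}^k m_j b_j^{r}\sum_{i=0}^{k-1} c_i b_j^{i}.
\end{equation*}
Applying the root identity inside the inner sum collapses it to $-b_j^k$, so the expression equals $-\sum_{j=1}^k m_j b_j^{r+k}=-\gamma_{r+k}$, which is precisely the $r$-th entry of $-\boldsymbol{\gamma}$.

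There is no real obstacle here; the only thing to be careful about is the index bookkeeping, namely that the Hankel structure $\Gamma(G,k)_{r,i}=\gamma_{r+i}$ (with rows and columns indexed $0,\ldots,k-1$) correctly pairs with $\mathbf{c}=(c_0,\ldots,c_{k-1})'$ so that the identity $\sum_{i=0}^{k-1}c_i b_j^i=-b_j^k$ can be applied uniformly across rows. Alternatively, one can give a one-line matrix derivation using the factorization $\Gamma(G,k)=BMB^{T}$ from Proposition \ref{pro1}: the vector $B^T\mathbf{c}$ has $j$-th coordinate $\sum_{i=0}^{k-1}c_i b_j^i=-b_j^k$, so $MB^T\mathbf{c}=-(m_j b_j^k)_j$ and left-multiplication by $B$ produces $-(\sum_j m_j b_j^{k+r})_{r=0}^{k-1}=-\boldsymbol{\gamma}$, which is the same computation in compact form.
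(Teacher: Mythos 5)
Your proof is correct and is exactly the routine verification the paper leaves implicit (its proof is just ``It is easily verified''): expanding the $r$-th entry of $\Gamma(G,k)\mathbf{c}$, swapping the finite sums, and using $\sum_{i=0}^{k-1}c_ib_j^i=-b_j^k$ from $P(b_j)=0$ yields $-\gamma_{k+r}$, as needed. The alternative one-line derivation via $\Gamma(G,k)=BMB^{T}$ is the same computation in matrix form and is equally valid.
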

\begin{proof}
It is easily verified.
\end{proof}

Propositions \ref{pro1} and \ref{pro2} establish a one-to-one map between the parameters of $G$ and its moments. They further tell us that the masses of $G$ are all zeros of $P(x)$ with coefficients ${\bf c}=-(\Gamma(G,k))^{-1}\cdot{\boldsymbol\gamma}$ and $c_{k}=1$.
As to the weights of $G$, they can be trivially obtained by solving linear equations,
$$\sum_{i=1}^km_ib_i^l=\gamma_l,\quad l=0,\ldots,k-1,$$
with $b_i$'s known.

\par

\par
\vskip 0.4cm
\setcounter{chapter}{3}
\setcounter{equation}{0} 
\noindent {\bf 3. Estimation}

\vskip 0.4cm
\noindent {\bf 3.1 Model and estimation strategy}

We consider a class of discrete PSDs with finite support on $\mathbb R^+$, that is,
\begin{equation*}
H({\boldsymbol\theta})=w_1\delta_{a_1}+\cdots+ w_k\delta_{a_k},\quad {\boldsymbol\theta}\in{\Theta},
\end{equation*} where
\begin{eqnarray*}
{\Theta} =\bigg\{{\boldsymbol\theta}=(a_1, w_1, \ldots, a_k, w_k): 0< a_1<\cdots <a_k<\infty;\ w_i>0,\ \sum_{i=1}^{k}w_i=1\bigg\}.
\end{eqnarray*}
Here, the order $k$ of $H$ is assumed known (when $k$ is also to be estimated, a consistent estimator of $k$ is given in \cite{CBY11}).

Suppose that the support $S_F$ of the LSD $F$ associated to $H$ and $c$ has $m$ ($1\leq m\leq k$) disjoint compact intervals. According to the discussion in Section 2, $H$ can be divided into $m$ parts, $H_1,\ldots, H_m$, with $H_i$ consisting of $k_i$ masses of $H$, $k_i\geq1$ and $\sum_{i=1}^mk_i=k$.

When $k_i$'s are all known and equal to 1, the assumption reduces to the split case in \cite{M08a}. By contrast, we consider that $k_i$'s are unknown, can be larger than 1, and are not necessarily equal.

Our estimation strategy is the following:
\begin{itemize}
\item[1)] determine the division of $H$ according to the separation of clusters of sample eigenvalues;
\item[2)] for each part $H_i$, obtain strongly consistent estimators of its moments;
\item[3)] obtain a strongly consistent estimator $\widehat{{\bf k}}_n$ of the partition $(k_1,\ldots,k_m)$ of numbers of masses in the $m$ parts $H_1,\ldots,H_m$;
\item[4)] by combination of these estimators and using the method of moments, finally obtain consistent estimators of all the weights and masses $(w_i,a_i)$.
\end{itemize}

Note that in the first step, an accurate division of $H$ may not be always achieved, especially when sample sizes are relatively small. A solution to this problem will be given later.

\vskip 0.4cm
\noindent {\bf 3.2 Estimation of the moments of $H_i$}

The following theorem re-expresses the moments of $H_i$ by contour integrals related to the companion Stieltjes transform $\underline{s}(z)$.

\begin{theorem}\label{th1} Suppose the assumptions in Lemma \ref{lema1} are fulfilled, then the $l$-th moment of $H_i$ can be expressed as
\begin{equation}\label{eq3.2}
\gamma_{i,l}=(-1)^{l}\frac{1}{c}\frac{1}{2\pi \rm i}\oint_{C_i}\frac{z\underline{s}^\prime(z)}{\underline{s}^l(z)}dz,\quad l=1,2,\ldots,
\end{equation}
where $C_i$ is a positively oriented contour described by the boundary of the rectangle
$$\{ z\in\mathbb C: \delta_i^-\leq\Re(z)\leq\delta_i^+, |\Im(z)|\leq1\},$$
where $\delta_i^-, \delta_i^+$ $(i=1,\ldots,m)$ are defined by \eqref{equ2.4} and $\delta_1^-<0$ if $c\geq 1$.
\end{theorem}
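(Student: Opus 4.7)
The plan is to pass from the given contour integral over $C_i$ in the $z$-plane to an equivalent contour integral in the $t$-plane (where the measure $H$ lives) via the substitution $t = u(z) = -1/\underline{s}(z)$, and then to evaluate the latter directly. The starting point is the representation of the moments of the discrete measure $H_i$ as
\[
\gamma_{i,l}=\sum_{a_j \in H_i} w_j a_j^l = -\frac{1}{2\pi\mathrm{i}}\oint_{\Gamma_i} t^l\, s_H(t)\, dt, \qquad s_H(t):=\int\frac{dH(x)}{x-t},
\]
valid for any positively oriented contour $\Gamma_i$ in the $t$-plane that winds once around each atom of $H$ belonging to $H_i$ and not at all around the remaining atoms of $H$.

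The next step is to express $s_H$ in terms of $\underline{s}$ using the Mar\v{c}enko-Pastur equation \eqref{equ2.3}. Setting $w = -1/\underline{s}$ and expanding $\int t/(1+t\underline{s})\, dH(t)$ via the identity $\int dH(t)/(1+t\underline{s}) = -w\, s_H(w)$, equation \eqref{equ2.3} rearranges to
\[
s_H(w) = \frac{1-c}{cw} - \frac{z(w)}{cw^{2}}.
\]
Plugged into the moment formula this produces two integrals; the first, $\oint_{\Gamma_i} t^{l-1}\, dt$, vanishes for $l \geq 1$ since $t^{l-1}$ is entire, leaving
\[
\gamma_{i,l} = \frac{1}{2\pi\mathrm{i}\, c}\oint_{\Gamma_i} t^{l-2}\, z(t)\, dt.
\]
I would then take $\Gamma_i := u(C_i)$ and change variables $t = u(z)$, with $dt = \underline{s}'(z)/\underline{s}(z)^{2}\, dz$ and $t^{l-2}\, dt = (-1)^l\, \underline{s}'(z)/\underline{s}(z)^l\, dz$; substitution yields the claimed formula \eqref{eq3.2}.

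The main obstacle, and the delicate part of the argument, is to verify that the choice $\Gamma_i := u(C_i)$ is legitimate, i.e.\ that the image contour winds exactly once (in the positive sense) around each atom of $H_i$ and not at all around the other atoms of $H$. This is a geometric check that relies on the monotonicity of $u$ on $S_F^c \cap \mathbb{R}$ recalled in Section~2.2, which forces the real-axis crossings of $\Gamma_i$ to occur precisely at the two points $u(\delta_i^\pm)$ bracketing the atoms of $H_i$; on the Schwarz reflection symmetry $\underline{s}(\bar z) = \overline{\underline{s}(z)}$, which makes $\Gamma_i$ symmetric about the real axis; and on the orientation-preserving property of the holomorphic map $\underline{s}$. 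The stipulation $\delta_1^- < 0$ when $c \geq 1$ is included precisely so that this geometric picture survives for the leftmost cluster, despite the singular behaviour of $F$ near the origin in that regime.
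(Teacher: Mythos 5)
Your argument is correct and is essentially the paper's own proof run in reverse: the paper pushes the $z$-integral forward through $t=u(z)=-1/\underline{s}(z)$, substitutes the Mar\v{c}enko--Pastur expression \eqref{equ2.3} for $z(u)$, and finishes with Fubini and the residue theorem, while you start from the $t$-plane residue representation of $\gamma_{i,l}$ via $s_H$ and pull back --- the same change of variables, the same use of the Mar\v{c}enko--Pastur equation, and the same key geometric fact that $u(C_i)$ is a simple closed curve enclosing exactly $S_{H_i}$, which the paper establishes by precisely the monotonicity, reflection-symmetry and $\Im u(z)\neq 0$ properties you list. The only minor caveat is that your opening formula treats $H_i$ as purely atomic, whereas the theorem (and the paper's Fubini step) covers general $H$; for the discrete model considered in the paper this changes nothing.
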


\begin{proof}
Let the image of $C_i$ under $u(z)=1/\underline{s}(z)$ be $$u(C_i)=\{u(z):z\in C_i\}.$$
Notice that $\underline{s}(z)$ is holomorphic on $C_i$. Then, $u(C_i)$ is a simple closed curve taking values on $\mathbb C\setminus (S_H\cup\{0\})$. (The function $u(z)=-1/\underline{s}(z)$ is analytic on $C_i$ and is a one-to-one map from $C_i$ to its image $u(C_i)$. Thus, the two curves $C_i$ and $u(C_i)$ are homeomorphic. Since $C_i$ is simple and closed (homeomorphic to a unit circle in $\mathbb C$), its image is also simple and closed.) Moreover, since $\Im(u(z))\neq0$ for all $z$ with $\Im(z)\neq0$, we have $u(C_i)\cap\mathbb R=\{u(\delta_i^-),u(\delta_i^+)\}$ and $u(C_i)$ encloses $[u(\delta_i^-),u(\delta_i^+)]$. Therefore, $u(C_i)$ encloses only $S_{H_i}$ and no other masses of $H$.

Applying this change of variable to the right hand side of \eqref{eq3.2}, we have
\begin{eqnarray*}
(-1)^{l}\frac{1}{c}\frac{1}{2\pi \rm i}\oint_{C_i}\frac{z\underline{s}^{\prime}(z)}{\underline{s}^l(z)}dz
&=&\frac{1}{c}\frac{1}{2\pi \rm i}\oint_{u(C_i)}z(u)u^{l-2}du\\
&=&\frac{1}{c}\frac{1}{2\pi \rm i}\oint_{u(C_i)}u^{l-1}+c\int\frac{tu^{l-1}}{u-t}dH(t)du\\
&=&\frac{1}{2\pi \rm i}\int\oint_{u(C_i)}\frac{tu^{l-1}}{u-t}dudH(t)\\
&=&\gamma_{i,l},
\end{eqnarray*}
where the second equation is from the Mar\v{c}enko-Pastur equation, and the last equation follows from the residue theorem.
\end{proof}

By substituting the empirical Stieltjes transform $\underline{s}_n(z)$ for $\underline{s}(z)$ in \eqref{eq3.2}, we get a natural estimator of ${\gamma}_{i,l}$:
\begin{equation}\label{eq3.3}
\widehat{\gamma}_{i,l}=(-1)^{l}\frac{n}{p}\frac{1}{2\pi \rm i}\oint_{C_i}\frac{z\underline{s}_n^\prime(z)}{\underline{s}_n^l(z)}dz,\quad l=1,2,\ldots.
\end{equation}

\begin{theorem}\label{th2}
Under the assumptions of Lemma \ref{lema1}, for each $l$ $(l\geq 1)$, $\widehat{\gamma}_{i,l}$ converges almost surely to $\gamma_{i,l}$.
\end{theorem}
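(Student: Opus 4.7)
The plan is to pass to the limit inside the contour integral. Since $p/n\to c$, the prefactor $n/p$ in \eqref{eq3.3} converges almost surely to $1/c$, so it suffices to prove that
$$I_n:=\oint_{C_i}\frac{z\,\underline{s}_n^{\prime}(z)}{\underline{s}_n^{l}(z)}\,dz
\;\xrightarrow{\text{a.s.}}\;
I:=\oint_{C_i}\frac{z\,\underline{s}^{\prime}(z)}{\underline{s}^{l}(z)}\,dz.$$
A natural route is to show that the integrand $z\,\underline{s}_n^{\prime}/\underline{s}_n^{l}$ converges uniformly on $C_{i}$ to $z\,\underline{s}^{\prime}/\underline{s}^{l}$ almost surely, after which $I_n\to I$ follows from the finite length of $C_{i}$.

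First I would check that, almost surely for all $n$ large, both $\underline{s}_n$ and $\underline{s}$ are holomorphic in an open neighborhood of $C_i$. For $\underline{s}$ this is automatic because $C_i$ lies in $\mathbb{C}\setminus(S_F\cup\{0\})$ by the choice \eqref{equ2.4} of the $\delta_i^{\pm}$ (with $\delta_1^-<0$ when $c\ge 1$ placing the pole at $0$ inside $C_1$ but not on it). For $\underline{s}_n$ the potential poles are the sample eigenvalues $\lambda_j$ and $0$; the no-eigenvalue-outside-the-support result of Bai and Silverstein ensures that, almost surely, no $\lambda_j$ falls in a fixed open neighborhood of $[\delta_i^-,\delta_i^+]\setminus S_F$ once $n$ is large enough, so $\underline{s}_n$ is analytic in a common neighborhood $U\supset C_i$.

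Second, I would upgrade the pointwise convergence $\underline{s}_n(z)\to\underline{s}(z)$ (already established after Lemma~\ref{lema1} for every $z\in\mathbb{C}\setminus(S_F\cup\{0\})$) to uniform convergence on $C_i$. Since $\{\underline{s}_n\}$ is a sequence of analytic functions on $U$ that is uniformly bounded on compact subsets of $U$ (the integrand is continuous there and, by the no-eigenvalue result, the denominator $\prod(\lambda_j-z)$ is bounded away from zero on $C_i$ a.s.), Vitali's convergence theorem for normal families yields uniform convergence on $C_i$. Cauchy's integral formula applied to $\underline{s}_n-\underline{s}$ on a slightly larger contour inside $U$ then also gives uniform convergence $\underline{s}_n^{\prime}\to\underline{s}^{\prime}$ on $C_i$.

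Third, I need a positive lower bound for $|\underline{s}_n|$ on $C_i$. On the horizontal sides of the rectangle $\Im(z)\neq 0$, the standard property $\operatorname{sgn}\Im(\underline{s}(z))=\operatorname{sgn}\Im(z)$ forces $|\underline{s}(z)|>0$; on the vertical sides $z\in\{\delta_i^-,\delta_i^+\}\times[-1,1]$ extended, the identity $u(\delta_i^{\pm})=-1/\underline{s}(\delta_i^{\pm})$ with $u(\delta_i^{\pm})\in S_H^c$ finite shows $\underline{s}(\delta_i^{\pm})\neq 0$, and continuity plus compactness give $\min_{z\in C_i}|\underline{s}(z)|\ge c_0>0$. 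Uniform convergence then yields $|\underline{s}_n(z)|\ge c_0/2$ on $C_i$ for $n$ large, a.s. Combining these three facts, $z\,\underline{s}_n^{\prime}(z)/\underline{s}_n^{l}(z)\to z\,\underline{s}^{\prime}(z)/\underline{s}^{l}(z)$ uniformly on $C_i$ almost surely, which gives $I_n\to I$ and, after multiplying by $(-1)^{l}(n/p)/(2\pi\mathrm{i})$, the claim.

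The main obstacle I expect is step two, namely turning the known pointwise almost-sure convergence of the Stieltjes transform into uniform convergence on a contour that comes arbitrarily close to the support at the crossing points $\delta_i^{\pm}$. This is exactly where the no-eigenvalue-outside-the-support result is indispensable: without it one cannot guarantee that $\underline{s}_n$ is even analytic on $C_i$ for large $n$, and the Vitali/normal-families argument would fail.
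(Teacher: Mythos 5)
Your proposal is correct and follows essentially the same route as the paper: both rely on the Bai--Silverstein no-eigenvalues-outside-the-support result to make the integrand well defined on the fixed contour $C_i$ for all large $n$, and then pass to the limit inside the contour integral using the a.s.\ convergence of $\underline{s}_n$ to $\underline{s}$. The only difference is one of detail, not of substance: where the paper compresses the limit passage into ``the integrand is bounded on $C_i$, apply dominated convergence,'' you spell out the uniform convergence on $C_i$ (via normal families/Vitali), the convergence of $\underline{s}_n'$ by Cauchy estimates, and the lower bound on $|\underline{s}_n|$ along the contour.
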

\begin{proof} From the fact that for sufficiently large $n$, with probability one, there are no sample eigenvalues located outside $S_F$ \citep{BS98}, we have then, for sufficiently large $n$, $z\underline{s}_n^\prime(z)/\underline{s}_n^l(z)$ as well as $z\underline{s}^\prime(z)/\underline{s}^l(z)$ are continuous on $C_i$, and thus bounded on the contour. By the convergence of $\underline{s}_n(z)$ and the dominated convergence theorem, almost surely,
\begin{eqnarray*}
|\gamma_{i,l}-\widehat{\gamma}_{i,l}|&=&\bigg|\oint_{C_i}\frac{z\underline{s}^\prime(z)}{\underline{s}^l(z)}-
\frac{z\underline{s}_n^\prime(z)}{\underline{s}_n^l(z)}dz\bigg|\\
&\leq&\oint_{C_i}\bigg|\frac{z\underline{s}^\prime(z)}{\underline{s}^l(z)}-\frac{z\underline{s}_n^\prime(z)}{\underline{s}_n^l(z)}\bigg||dz|\\
&\rightarrow&0,\quad n\rightarrow\infty.
\end{eqnarray*}
\end{proof}

A technical issue here is the contour integration in \eqref{eq3.3}. It can be calculated by the residue theorem and an algorithm is described in Appendix.

\vskip 0.4cm
\noindent {\bf 3.3 Estimation of the partition $(k_1,\ldots,k_m)$}

Denote by ${\bf k}=(k_1,\ldots,k_m)'$ the vector of orders of $H_i$'s, the collection of all possible values of ${\bf k}$ is
$$\mathbb K=\{{\bf k}: k_i\geq1,\ \sum_{i=1}^mk_i=k\}.$$
Let ${\bf k}_0=(k_{0,1},\ldots,k_{0,m})'$ be the true value of ${\bf k}$. From Proposition \ref{pro1}, we know that the smallest eigenvalue $\lambda_{\min}(\Gamma(H_i,k_i))$ of the Hankel matrix $\Gamma(H_i,k_i)$ is positive if $k_i\leq k_{0,i}$, and otherwise 0. Based on this property, we construct the following objective function
\begin{equation*}
g({\bf k})=\min\left\{\lambda_{\min}(\Gamma(H_i,k_i)), \quad i=1,\ldots,m\right\},\quad{\bf k}\in \mathbb K,
\end{equation*}
that satisfies
$$g({\bf k}_0)>0 \quad {\rm and}\quad g({\bf k})=0\quad ({\bf k}\neq {\bf k}_0).$$
So, an estimator of ${\bf k}_0$ can be obtained by maximizing the estimate of $g({\bf k})$, i.e.
\begin{eqnarray*}
\widehat{{\bf k}}_n&=&\arg\max_{{\bf k}\in\mathbb K}\widehat{g}({\bf k})\\
&=&\arg\max_{{\bf k}\in\mathbb K}\min\left\{\lambda_{\min}(\widehat{\Gamma}(H_i,k_i)), \quad i=1,\ldots,m\right\},
\end{eqnarray*}
where  $\widehat{\Gamma}(H_i,k_i)=(\widehat{\gamma}_{i,r+s-2})_{1\leq r,\ s\leq k_i}$ with its entries defined by \eqref{eq3.3}.

Note that when evaluating the estimator $\widehat{{\bf k}}_n$, it is not necessary to compare $\widehat{g}({\bf k})$'s at all ${\bf k}$-points, but only at a small part of them. More precisely, for the $i$-th element $k_i$ of $\bf k$, in theory, its value may range from 1 to $k-m+1$ and its true value $k_{0,i}$ makes $\Gamma(H_i,k_i)$ positive definite. This implies that if $\Gamma(H_i,k_i)$ is non-positive definite then $k_i\neq k_{0,i}$ (actually $k_i>k_{0,i}$). Based on this knowledge, in practice, it is enough to consider $k_i$ that belongs to a set $\{1,\ldots, d_i\}$, where $d_i\leq k-m+1$ stands for the largest integer such that $\widehat{\Gamma}(H_i,d_i)$ is positive definite. This technique can effectively reduce the computational burden when the cardinality of $\mathbb K$ is large.

\begin{theorem}\label{th3}
Under the assumptions of Lemma \ref{lema1}, almost surely,
$$\widehat{{\bf k}}_n\rightarrow{\bf k}_0,\quad {\rm as}\quad n\rightarrow\infty.$$
\end{theorem}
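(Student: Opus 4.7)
The plan is to deduce $\widehat{\bf k}_n\to {\bf k}_0$ almost surely by reducing the claim to pointwise convergence of the objective $\widehat g({\bf k})$ to $g({\bf k})$ on the finite set $\mathbb K$, and then invoking a one-line argmax identification argument. Two ingredients do all the work: the identification property of $g$ at ${\bf k}_0$ (coming from Proposition \ref{pro1}) and the almost sure convergence of the empirical moments $\widehat\gamma_{i,l}\to\gamma_{i,l}$ (coming from Theorem \ref{th2}).

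First, I would verify carefully that ${\bf k}_0$ is the unique maximizer of $g$ on $\mathbb K$. For any ${\bf k}\in\mathbb K$ with ${\bf k}\neq {\bf k}_0$, the constraint $\sum_i k_i=k=\sum_i k_{0,i}$ forces at least one coordinate $k_{i^*}>k_{0,i^*}$, so Proposition \ref{pro1}, Equation \eqref{equ2.7} gives $\det\Gamma(H_{i^*},k_{i^*})=0$, hence $\lambda_{\min}(\Gamma(H_{i^*},k_{i^*}))=0$ and consequently $g({\bf k})=0$. At the true value, every $\Gamma(H_i,k_{0,i})$ is positive definite by Equation \eqref{equ2.6}, so $g({\bf k}_0)>0$. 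Set $\eta:=g({\bf k}_0)/2>0$; this constant controls all subsequent comparisons.

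Next I would promote Theorem \ref{th2} to convergence of the Hankel matrices. For each fixed $i$ and each fixed $k_i\le k$ the entries of $\widehat\Gamma(H_i,k_i)$ are finitely many $\widehat\gamma_{i,l}$'s, each converging almost surely to $\gamma_{i,l}$. Taking the intersection of the (finitely many) probability-one events, we have $\widehat\Gamma(H_i,k_i)\to\Gamma(H_i,k_i)$ entrywise almost surely, simultaneously for all $i\le m$ and all admissible $k_i$. Since the minimum eigenvalue of a Hermitian matrix is a Lipschitz function of its entries (Weyl's inequality), this yields $\lambda_{\min}(\widehat\Gamma(H_i,k_i))\to\lambda_{\min}(\Gamma(H_i,k_i))$ almost surely, and hence $\widehat g({\bf k})\to g({\bf k})$ almost surely for every ${\bf k}\in\mathbb K$.

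To conclude, note that $\mathbb K$ is finite, so on a set of probability one there exists $n_0$ such that for all $n\ge n_0$ one has $|\widehat g({\bf k})-g({\bf k})|<\eta$ for every ${\bf k}\in\mathbb K$ simultaneously. Then $\widehat g({\bf k}_0)>g({\bf k}_0)-\eta=\eta$, while $\widehat g({\bf k})<g({\bf k})+\eta=\eta$ for every ${\bf k}\neq{\bf k}_0$, so the maximizer $\widehat{\bf k}_n$ must coincide with ${\bf k}_0$ for all $n\ge n_0$. I do not anticipate any serious obstacle here: everything is finite-dimensional and the hard analytic work (convergence of contour-integral moments off the support, location of sample eigenvalues inside $S_F$) has already been absorbed into Theorem \ref{th2}. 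The only point requiring care is the combinatorial identification step in the first paragraph, where the conservation law $\sum_i k_i=k$ is essential to rule out the possibility of a ``dominating'' alternative with $k_i<k_{0,i}$ for every $i$.
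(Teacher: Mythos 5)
Your proof is correct and follows essentially the same route as the paper, whose own proof is the one-liner ``follows from Theorem \ref{th2} and the fact that ${\bf k}_0$ is the unique maximizer of $g$ on the finite set $\mathbb K$''; you simply make explicit the three ingredients (uniqueness via Proposition \ref{pro1} and the constraint $\sum_i k_i=k$, a.s.\ convergence of $\widehat g$ via Theorem \ref{th2} and continuity of $\lambda_{\min}$, and the finite-set argmax step). The only hair worth noting, which the paper glosses over in exactly the same way, is that the $(1,1)$ entries $\widehat\gamma_{i,0}$ of the empirical Hankel matrices are not literally covered by Theorem \ref{th2} (stated for $l\geq 1$), so their consistency must be supplied separately, e.g.\ from the exact-separation results used elsewhere in the paper.
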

\begin{proof}
The conclusion follows from Theorem \ref{th2} and the fact that ${\bf k}_0$ is the unique maximizer of the function $g({\bf k})$ on the finite set $\mathbb K$.
\end{proof}

\vskip 0.4cm
\noindent {\bf 3.4 Estimation of ${\boldsymbol\theta}$}

By Theorem \ref{th3} and since the partition set $\mathbb K$ is finite, almost surely, $\widehat{\bf k}_n={\bf k}_0$ eventually. As far as the consistency is concerned for estimation of $\boldsymbol \theta$, we may assume in this section that the partition $\bf k$ is known without loss of generality. Then, the estimator $\widehat{\boldsymbol\theta}_{n}$ of ${\boldsymbol\theta}$ is defined to be a solution of the following $2k$ equations:
\begin{eqnarray}\label{eq3.5}
\int x^ldH_i({\boldsymbol\theta})=\widehat{\gamma}_{i,l},\quad l=0,\ldots,2k_i-1,\quad i=1,\ldots,m,
\end{eqnarray}
where $\widehat{\gamma}_{i,0}=v_i/v$, $i=1,\ldots,m$, ($v$ is the total number of positive sample eigenvalues and $v_i$ is the number of those forming the $i$-th cluster). We call $\widehat{{\boldsymbol\theta}}_{n}$ the {\em local moment estimator} (LME) of $\boldsymbol\theta$, since it is obtained by the moments of $H_i$'s, rather than the moments of $H$. Accordingly, the LME of $H$ is $\widehat{H}=H(\widehat{\boldsymbol\theta}_n)$. When $k_1=\cdots=k_m=1$, the LME reduces to the one in \cite{M08a}.

The solution of the moment equations \eqref{eq3.5} exists and is unique if the matrices $\widehat{\Gamma}(H_i,k_i)$'s are all positive definite. Moreover, a fast algorithm for the solution exists following the equations given in Section 2.3: indeed, the algorithm needs to solve a one-variable polynomial equation and a linear system.


Next, we establish the strong consistency of the LME as follows.
\begin{theorem}\label{th4}
In addition to the assumptions in Lemma \ref{lema1}, suppose that the true value of the parameter vector ${\boldsymbol\theta}_0$ is an inner point of $\Theta$. Then, the LME $\widehat{\theta}_n$ is strongly consistent: almost surely,
$$\widehat{{\boldsymbol\theta}}_n\rightarrow {\boldsymbol\theta}_0,\quad n\rightarrow\infty.$$
\end{theorem}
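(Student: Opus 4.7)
My plan is to leverage Propositions~\ref{pro1} and \ref{pro2} to express the LME as a continuous functional of the local moment estimates $\{\widehat\gamma_{i,l}\}$, then push the almost sure convergence established in Theorem~\ref{th2} through this continuous inversion.

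First, Theorem~\ref{th3} gives that almost surely $\widehat{\bf k}_n={\bf k}_0$ for all sufficiently large $n$, so it suffices to solve the system \eqref{eq3.5} with the true partition $(k_{0,1},\ldots,k_{0,m})$ held fixed. For each $i$ and each $l\ge1$, Theorem~\ref{th2} yields $\widehat\gamma_{i,l}\to\gamma_{0,i,l}$ almost surely. For $l=0$, $\widehat\gamma_{i,0}=v_i/v$ is the empirical proportion of sample eigenvalues in the $i$-th cluster, and I would invoke the exact separation theorem of Bai and Silverstein (Ann.\ Probab., 1999): it asserts that the number of sample eigenvalues in the $i$-th cluster matches, for large $n$, the number of population eigenvalues whose $u$-images lie in $[u(\delta_i^-),u(\delta_i^+)]$. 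Hence $v_i/p\to H_{0,i}(\mathbb R)$ and $v/p\to1$ almost surely (all $a_{0,j}>0$), giving $\widehat\gamma_{i,0}\to\gamma_{0,i,0}$.

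Second, I would invert the system \eqref{eq3.5} componentwise using the scheme of Section~2.3. Because $\boldsymbol\theta_0$ is an interior point of $\Theta$, Proposition~\ref{pro1} guarantees each true Hankel matrix $\Gamma(H_{0,i},k_{0,i})$ is strictly positive definite; hence $\widehat\Gamma(H_i,k_{0,i})$ is positive definite for $n$ large and its inverse depends continuously on its entries at the limit. Proposition~\ref{pro2} then delivers estimated polynomial coefficients $\widehat{\bf c}_i=-\widehat\Gamma(H_i,k_{0,i})^{-1}\widehat{\boldsymbol\gamma}_i$ that converge almost surely to the true coefficients of $P_i^0(x)=\prod_{j=1}^{k_{0,i}}(x-a_{0,i,j})$. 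Since the limiting polynomial has $k_{0,i}$ simple real positive roots, the classical continuity of roots with respect to coefficients (a Rouch\'e argument around each isolated root) yields estimated masses, ordered increasingly, that converge almost surely to $(a_{0,i,1},\ldots,a_{0,i,k_{0,i}})$. Solving the associated Vandermonde system---nonsingular because the limiting nodes are distinct and positive---then gives estimated weights converging to the true ones.

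The main obstacle I anticipate is the continuity of polynomial roots: one must verify that for $n$ large the perturbed polynomial has only simple real roots located near the true ones, so that ordering them increasingly pairs them correctly with the limits. This is a standard perturbation argument once the limit has simple roots, which is guaranteed here by $\boldsymbol\theta_0\in\mathrm{int}(\Theta)$. A secondary technical point is the zeroth-moment convergence $\widehat\gamma_{i,0}\to\gamma_{0,i,0}$, which is not directly covered by Theorem~\ref{th2} and is handled via the exact separation theorem cited above. Assembling the $m$ componentwise convergences finally produces $\widehat{\boldsymbol\theta}_n\to\boldsymbol\theta_0$ almost surely.
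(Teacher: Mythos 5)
Your proposal is correct and follows the same basic strategy as the paper: fix the partition via Theorem~\ref{th3}, invoke the almost sure convergence of the local moment estimates from Theorem~\ref{th2}, and then transfer this convergence to the parameters by continuity of the map from moments back to (masses, weights). The difference is in how that last step is carried out. The paper argues abstractly: it defines $h_i:\boldsymbol\theta_i\mapsto(\gamma_{i,0},\ldots,\gamma_{i,2k_i-1})$, cites Propositions~\ref{pro1} and~\ref{pro2} for invertibility, and applies the implicit (inverse) function theorem to get a local diffeomorphism near $\boldsymbol\theta_{i0}$, so that $\widehat{\boldsymbol\theta}_{in}=h_i^{-1}(\widehat{\boldsymbol\gamma}_{in})\to\boldsymbol\theta_{i0}$. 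You instead make the inversion of Section~2.3 explicit and track its continuity directly: convergence of the Hankel matrix and its inverse, convergence of the polynomial coefficients, continuity of simple roots via a Rouch\'e argument (simplicity guaranteed by $\boldsymbol\theta_0$ interior to $\Theta$), and then the nonsingular Vandermonde system for the weights. Your constructive route buys two things: it avoids having to verify nonsingularity of the Jacobian of $h_i$ (which the paper's appeal to the inverse function theorem implicitly requires but does not check), and it explicitly treats the zeroth moments $\widehat\gamma_{i,0}=v_i/v$, which are not covered by Theorem~\ref{th2} and which the paper's proof passes over in silence; invoking the exact separation theorem of Bai and Silverstein (1999) for this is exactly the right tool. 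One small caveat on that step: your claim $v/p\to 1$ holds only when $c\le 1$; for $c>1$ the number of positive sample eigenvalues is $\min(p,n)=n$, so the normalization of the cluster counts (and the treatment of the cluster whose contour encloses the origin, where $\delta_1^-<0$) needs a separate word, but this is a limitation shared by the paper's own definition of $\widehat\gamma_{i,0}$ rather than a defect of your argument.
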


\begin{proof}
Write ${\widehat{\boldsymbol\theta}_n}=({\widehat{\boldsymbol\theta}_{1n}},\ldots,{\widehat{\boldsymbol\theta}_{mn}})$, where $\widehat{\boldsymbol\theta}_{in}$ is the LME of the parameter vector ${\boldsymbol\theta}_{i0}$ of $H_i$ ($i=1,\ldots,m$). It is sufficient to prove that, almost surely,
$$\widehat{\boldsymbol\theta}_{in}\rightarrow\boldsymbol\theta_{i0},\quad n\rightarrow\infty,$$
for each $i$ ($i=1,\ldots,m$).

Let $h_i$ be the function  $R^{2k_i} \to R^{2k_i}$:
\[    {\boldsymbol\theta}_i \mapsto {\boldsymbol\gamma}_{i}=\left(\gamma_{i,0},\ldots,\gamma_{i,2k_i-1}\right).
\]
Then the multivariate function $h_i$ is invertible from the conclusions of Propositions \ref{pro1} and \ref{pro2}.

Denote $\widehat{\boldsymbol\gamma}_{in} = \left(\widehat\gamma_{i,0},\ldots,\widehat\gamma_{i,2k_i-1}\right)$
and $\boldsymbol\gamma_{i0} = h_i(\boldsymbol\theta_{i0})$. By the convergence of $\widehat{\boldsymbol\gamma}_{in}$ (Theorem \ref{th2}) and the implicit function theorem, there exists a
neighborhood $U_i$ of $\boldsymbol\theta_{i0}$ and a neighborhood $V_i$ of $\boldsymbol\gamma_{i0}$, such that $h_i$ is a differomorphism from $U_i$ onto $V_i$. Moreover, $\widehat{\boldsymbol\theta}_{in} = h_i^{-1}(\widehat{\boldsymbol\gamma}_{in})\in U_i$ exists almost surely for all large $n$.
Therefore, $\widehat{\boldsymbol\theta}_{in}$ converges to $\boldsymbol\theta_{i0}=h_i^{-1}(\boldsymbol\gamma_{i0})$
almost surely, as $n\rightarrow\infty$.
\end{proof}

\vskip 0.4cm
\noindent {\bf 3.5 A generalization of the local moment estimator}

The proposed estimation procedure needs a good judgment on the separation of clusters of sample eigenvalues. This may be indeed a problem when two or more adjacent clusters are very close, which can happen when the sample size is too small. To handle this problem, we introduce here a generalized version of the estimation procedure. The resulting estimator is referred as generalized LME (GLME).

Suppose that the support $S_F$ has $m\ (\geq1)$ disjoint compact intervals, and accordingly $H$ gains a division of $m$ parts: $H_1,\ldots, H_m$. Without loss generality, we suppose that the first two clusters of sample eigenvalues have no clear separation under a situation of finite sample size. Our strategy to cope with this is simply to merge these two clusters into one and treat $H_1$ and $H_2$ as a whole. Then, the GMLE can be obtained by conducting a similar procedure of estimation as mentioned in Section 3.1.

An extreme case of the GLME is to merge all clusters into one, then one may find with surprise that the GLME becomes a ``full moment" estimator which is equivalent to the moment estimator in \cite{Bai2010}. In this sense, the GLME encompasses this moment method. However, the merging procedure may result in a reduction of estimation efficiency, which will be illustrated numerically in the next section.

On theoretical aspect, it can be easily shown that Theorems \ref{th1}--\ref{th3} still hold true after the merging procedure. We can therefore obtain the strong convergence of the GLME by a similar proof of Theorem \ref{th4}. Hence these proofs are omitted.

\vskip 0.4cm
\setcounter{chapter}{4}
\setcounter{equation}{0} 
\noindent {\bf 4. Simulation}

In this section, simulations are carried out to examine the performance of the proposed estimator comparing with the estimator in \cite{M08a} (referred as ME), and the one in \cite{Bai2010} (referred as BCY).

Samples are drawn from mean-zero normal distribution with $(p, n)$ = (320, 1000) for the estimation of $H$, and $(p, n)$ = (320, 1000), (160, 500), (64, 200), (32, 100), (16, 50) for the estimation of the partition $\bf k$ of $H$. The independent replications are 1000. More $p/n$ combinations are considered for the partition estimator $\widehat{\bf k}_n$ since this step has a primary importance on the overall performance of the procedure.

In order to measure the distance between $H$ and its estimate $\widehat{H}$, we consider the Wasserstein distance $d=\int |Q_{H}(t)-Q_{\widehat{H}}(t)|dt$ where $Q_{\mu}(t)$ is the quantile  function of a probability measure $\mu$. Execution times are also provided for one realization of $\widehat{H}$ in seconds. All programs are realized in Mathematica 8 software, and run on a PC equipped with 3.5GHk CPU and 8GB physical RAM.

We first consider a case in \cite{M08a} where $H=0.5\delta_1+0.25\delta_7+0.125\delta_{15}+0.125\delta_{25}$ and $c=0.32$.
In this case, $H$ has four atoms at 1, 7, 15, and 25, while the sample eigenvalues form three clusters, and spread over $S_F=[0.2615, 1.6935]\cup[3.2610, 10.1562]\cup[10.2899, 38.0931]$ in the limit, see Figure \ref{fig3}. In Mestre's paper, it was shown that the ME performed very well by assuming all weight parameters (multiplicities) being known even if the splitting condition is not verified by the last two atoms.

In the viewpoint of the LME method, the PSD $H$ can only be divided into three parts: $H_1=0.5\delta_1$, $H_2=0.25\delta_7$, and $H_3=0.125\delta_{15}+0.125\delta_{25}$. Thus, the true partition of $H$ is ${\bf k}_0=(1,1,2)$. Table \ref{table1} presents the frequency of estimates of the partition ${\bf k}$. The results show that the true model can be identified with an accuracy of 100\% when the sample size $n$ is larger than 200, and the accuracy decreases as $n$ goes smaller.

Table \ref{table2} presents statistics for the three estimators of $H$. The first six rows are results assuming all the weights $\{w_i\}$ are known, while in the last four rows are results assuming only $\{w_1, w_2\}$ are known and $w_3$ is to be estimated ($w_4$ is determined by $\sum w_i=1$). Overall, the LME is as good as the ME when all weights are known, and is much better than the BCY in all cases. When $w_3$ is unknown, the problem is harder resulting larger distance values of $d$ for both methods LME and BCY. This difficulty is also reflected by larger variances of the estimates of $a_3$ and $a_4$ which are closely related to the parameter $w_3$ (and $w_4$). Concerning the execution time shown in the table, the BCY is the fastest followed by the ME, and then by the LME. However, the elapsed time of the BCY estimation increases rapidly as the number of unknown parameters increases.

\begin{figure}
\begin{minipage}[t]{0.5\linewidth}
\includegraphics[width=2.5in]{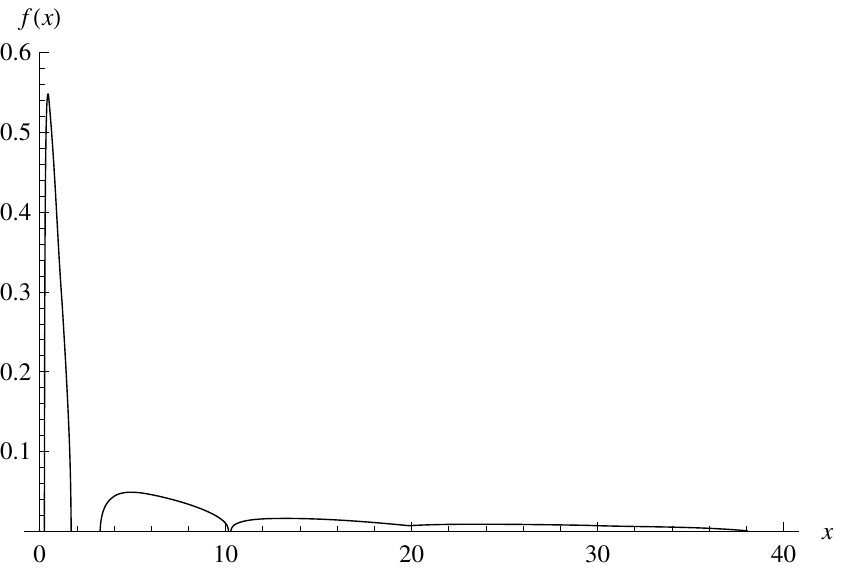}
\end{minipage}%
\begin{minipage}[t]{0.5\linewidth}
\includegraphics[width=2.5in]{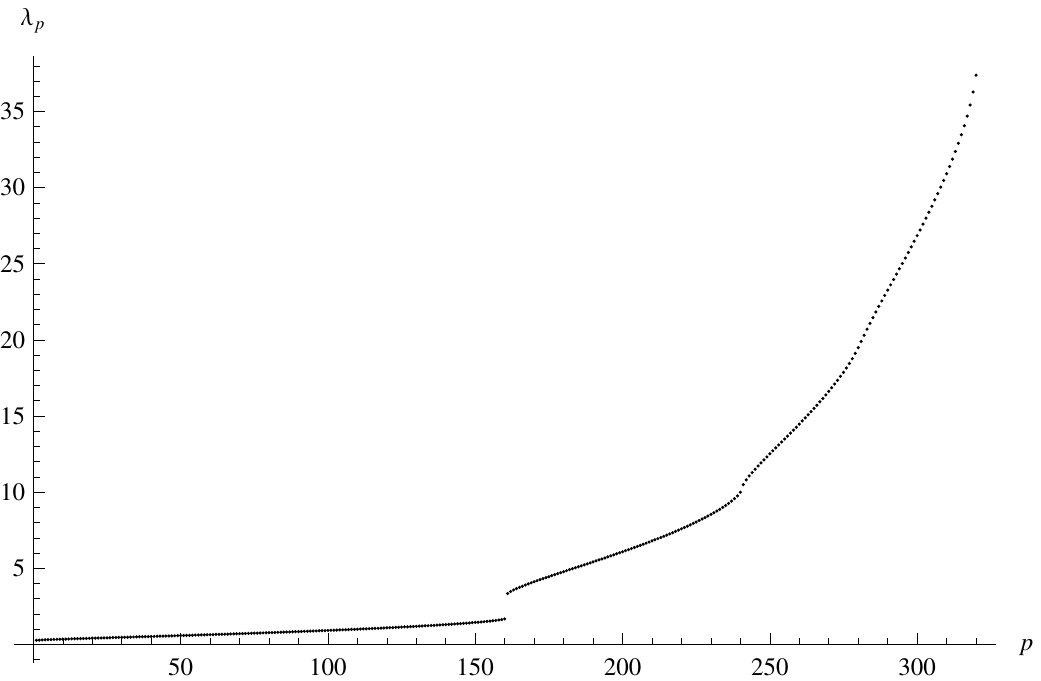}
\end{minipage}
\caption{The density curve of $F$ (left) and the average of the $i$-th ($i=1,\ldots,320$) sample eigenvalues (right) from 1000 replications for $H=0.5\delta_1+0.25\delta_7+0.125\delta_{15}+0.125\delta_{25}$ and $c=0.32$.}
\label{fig3}
\end{figure}

\begin{table}
\begin{center}
\caption{Frequency of estimates for the partition of $H$: $H_1=0.5\delta_1$, $H_2=0.25\delta_7$, $H_3=0.125\delta_{15}+0.125\delta_{25}$ with $p/n=0.32$.}
\label{table1}
\begin{tabular}{|lccc|}
\hline
Dimensions&{\bf k}=$(1,1,2)'$&{\bf k}=$(1,2,1)'$&{\bf k}=$(2,1,1)'$\\
\hline
$(p,n)=(320,1000)$&1000&0&0\\
$(p,n)=(160,500)$ &1000&0&0\\
$(p,n)=(64,200)$  &999 &0&1\\
$(p,n)=(32,100)$  &896&45&59\\
$(p,n)=(16,50)$   &623&169&208\\
\hline
\end{tabular}
\end{center}
\end{table}

\begin{table}
\begin{center}
\caption{Estimates for $H=0.5\delta_1+0.25\delta_7+0.125\delta_{15}+0.125\delta_{25}$ with $p=320$ and $n=1000$.}
\label{table2}
\begin{tabular}{|llccccccc|}
\hline
&&$a_1 $&$a_2$&$a_3$&$w_3$&$a_4$&$d$&Time\\
\hline
ME        &Mean   &1.0000&7.0031& 14.9987&-     & 25.0001&0.0425&0.533s\\
          &St. D. &0.0041&0.0407&\ 0.1368&-     &\ 0.1964&0.0199&\\
LME       &Mean   &1.0000&7.0060& 14.9533&-     & 25.0381&0.0447&0.578s\\
          &St. D. &0.0040&0.0401&\ 0.1371&-     &\ 0.2033&0.0205&\\
BCY       &Mean   &0.9924&7.0387& 14.8968&-     & 25.0658&0.0887&0.147s\\
          &St. D. &0.0189&0.1204&\ 0.3027&-     &\ 0.2312&0.0554&\\
LME$^*$   &Mean   &1.0000&7.0027& 14.9935&0.1259& 25.0772&0.1136&0.890s\\
          &St. D. &0.0040&0.0401&\ 0.2398&0.0059&\ 0.3520&0.0662&\\
BCY$^*$   &Mean   &1.0012&6.9806& 15.1350&0.1288& 25.1728&0.2143&0.710s\\
          &St. D. &0.0082&0.0753&\ 0.5738&0.0113&\ 0.4903&0.1368&\\
\hline
\end{tabular}
\end{center}
\end{table}

It should be noticed that in general when the splitting condition is not satisfied, the performance of the ME may decrease sharply, and the estimates may suffer from large biases. Next, we show this phenomenon and also examine the performances the LME and the BCY in such situations.

We consider a similar model where the third atom of $H$ is set to be 20 instead of 15 and other settings remain unchanged, that is, $H=0.5\delta_1+0.25\delta_7+0.125\delta_{20}+0.125\delta_{25}$ and $c=0.32$. The empirical and limiting distributions of sample eigenvalues are illustrated in Figure \ref{fig4}, where $S_F=[0.2617, 1.6951]\cup[3.2916, 10.4557]\cup[12.3253, 39.2608]$.

Analogous statistics are summarized in Tables \ref{table3} and \ref{table4}. The results in Table \ref{table3} show that the estimation of the partition $\bf k$ is more difficult in this case, but its accuracy still achieves 100\% with the sample size $n=1000$.  The statistics in Table \ref{table4} reveal that the estimators of $a_3$ and $a_4$ from the ME have a bias as large as 0.85 in average when all weight parameters are assumed known, while the LME and the BCY are unbiased in the same settings. On the other hand, it is again confirmed that the LME improves upon the BCY, especially when the weight parameters are partially unknown.

\begin{figure}
\begin{center}
\begin{minipage}[t]{0.5\linewidth}
\includegraphics[width=2.5in]{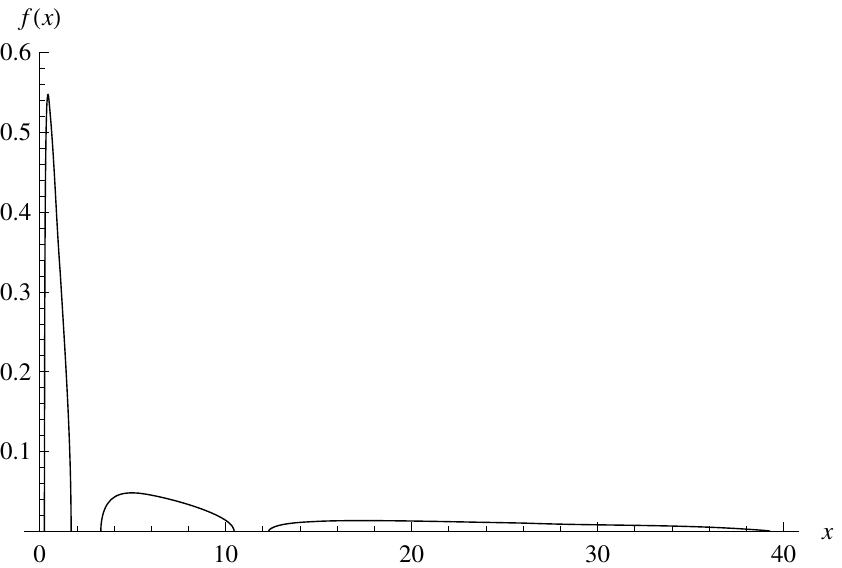}
\end{minipage}%
\begin{minipage}[t]{0.5\linewidth}
\includegraphics[width=2.5in]{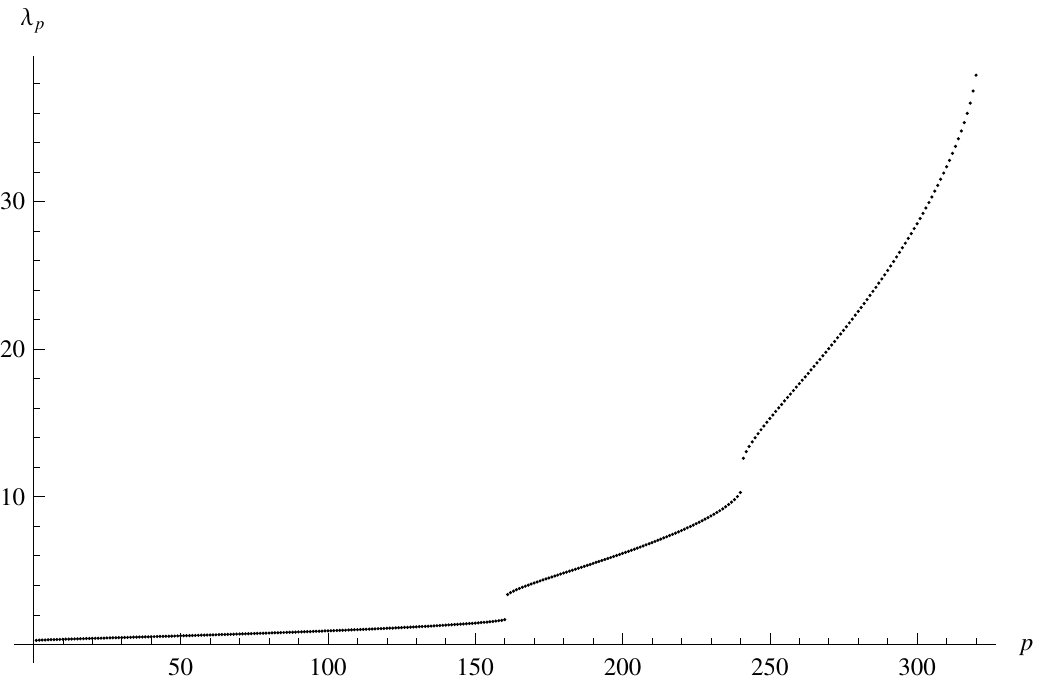}
\end{minipage}
\caption{The density curve of $F$ (left) and the average of the $i$-th ($i=1,\ldots,320$) sample eigenvalues (right) from 1000 replications for $H=0.5\delta_1+0.25\delta_7+0.125\delta_{20}+0.125\delta_{25}$ and $c=0.32$.}
\label{fig4}
\end{center}
\end{figure}

\begin{table}
\begin{center}
\caption{Frequency of estimates for the partition of $H$: $H_1=0.5\delta_1$, $H_2=0.25\delta_7$, $H_3=0.125\delta_{20}+0.125\delta_{25}$ with $p/n=0.32$.}
\label{table3}
\begin{tabular}{|lccc|}
\hline
Dimensions&{\bf k}=$(1,1,2)'$&{\bf k}=$(1,2,1)'$&{\bf k}=$(2,1,1)'$\\
\hline
$(p,n)=(320,1000)$&1000&0 &0\\
$(p,n)=(160,500)$ &922 &28&50\\
$(p,n)=(64,200)$  &595 &183&222\\
$(p,n)=(32,100)$  &455 &267&278\\
$(p,n)=(16,50)$   &376 &260&364\\
\hline
\end{tabular}
\end{center}
\end{table}

\begin{table}
\begin{center}
\caption{Estimates for $H=0.5\delta_1+0.25\delta_7+0.125\delta_{20}+0.125\delta_{25}$ with $p=320$ and $n=1000$.}
\label{table4}
\begin{tabular}{|llccccccc|}
\hline
&&$a_1 $&$a_2$&$a_3$&$w_3$&$a_4$&$d$&Time\\
\hline
ME        &Mean   &1.0001&6.9996&\bf{19.1483}&-&\bf{25.8521}&\bf{0.2224}&0.533s\\
          &St. D. &0.0041&0.0395&\ 0.1836&-     &\ 0.2068&0.0404&\\
LME       &Mean   &1.0000&7.0006& 19.9157&-     & 25.0811&0.0620&0.575s\\
          &St. D. &0.0040&0.0391&\ 0.2404&-     &\ 0.2631&0.0341&\\
BCY       &Mean   &0.9965&7.0090& 19.9028&-     & 25.0874&0.0875&0.142s\\
          &St. D. &0.0126&0.0692&\ 0.3456&-     &\ 0.3155&0.0516&\\
LME$^*$   &Mean   &1.0000&7.0003& 19.8739&0.1282& 25.2896&0.2588&0.896s\\
          &St. D. &0.0039&0.0390&\ 0.7883&0.0342&\ 0.8857&0.1464&\\
BCY$^*$   &Mean   &0.9993&6.9983& 19.8587&0.1331& 25.4569&0.3286&0.865s\\
          &St. D. &0.0054&0.0446&\ 1.2884&0.0437&\ 1.0888&0.1685&\\
\hline
\end{tabular}
\end{center}
\end{table}

Finally, we study a case where $H=0.5\delta_1+0.25\delta_3+0.125\delta_{15}+0.125\delta_{25}$ and $c=0.32$ to examine the performance of the GLME. The empirical and limiting distributions of sample eigenvalues are illustrated in Figure \ref{fig5}, where $S_F=[0.2552, 1.6086]\cup[1.6609, 4.7592]\cup[9.1912, 37.6300]$. With the used dimensions, the first two clusters of sample eigenvalues are too close to be identified, and we have to merge these two clusters into one to get the GLME of $H$ (thus no weight parameters are known at all). For comparison, we also present the LME by assuming that we know the true separation of $S_F$ into three intervals (which is not seen from the data).

Statistics in Table \ref{table5} show a perfect estimation of $\bf k$ with sample sizes $n=500, 1000$. Results in Table \ref{table6} demonstrate that the GMLE has a very good performance with only a slight reduction in estimation efficiency compared with the (impractical) LME.

Note that the BCY becomes unstable for this model as, for example, the empirical moment equations defining the estimator often have no real solutions. A major reason is that the required estimates of the 6-th and 7-th moments of $H$ have poor accuracy in such a situation.

\begin{figure}
\begin{center}
\begin{minipage}[t]{0.5\linewidth}
\includegraphics[width=2.5in]{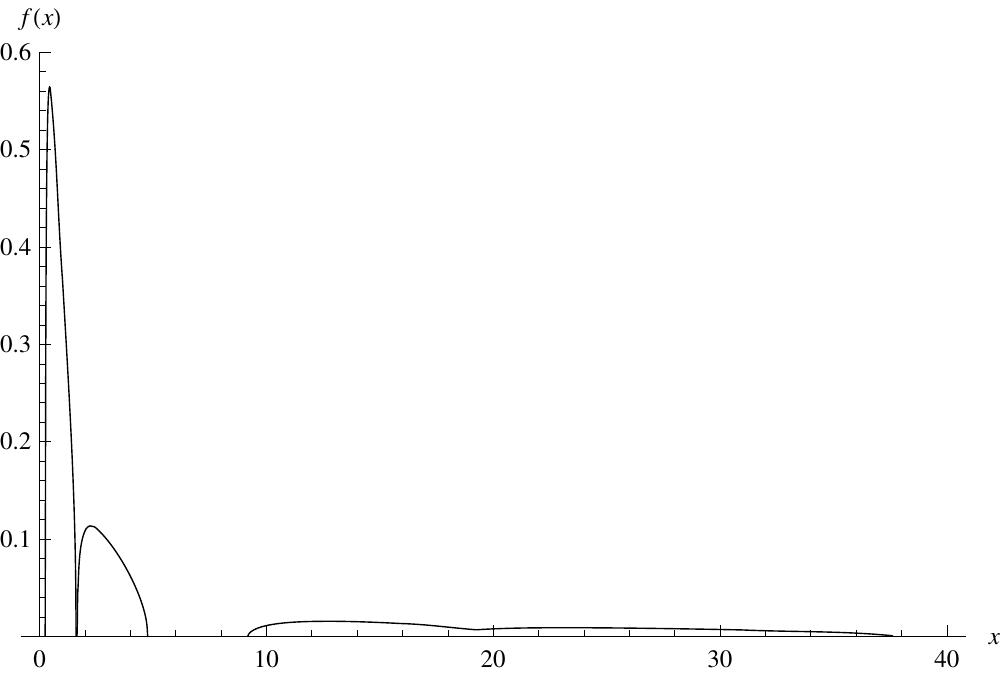}
\end{minipage}%
\begin{minipage}[t]{0.5\linewidth}
\includegraphics[width=2.5in]{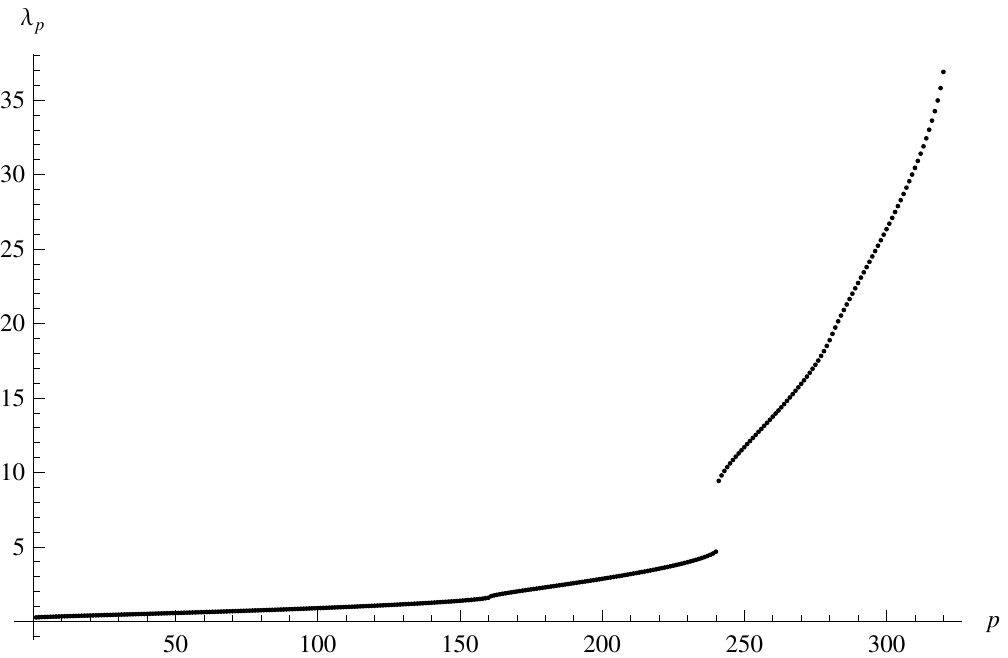}
\end{minipage}
\caption{The density curve of $F$ (left) and the average of the $i$-th ($i=1,\ldots,320$) sample eigenvalues (right) from 1000 replications for $H=0.5\delta_1+0.25\delta_3+0.125\delta_{15}+0.125\delta_{25}$ and $c=0.32$.}
\label{fig5}
\end{center}
\end{figure}

\begin{table}
\begin{center}
\caption{Frequency of estimates for the partition of $H$: $H_1=0.5\delta_1+0.25\delta_3$, $H_2=0.125\delta_{15}+0.125\delta_{25}$ with $p/n=0.32$.}
\label{table5}
\begin{tabular}{|lccc|}
\hline
Dimensions&{\bf k}=$(2,2)'$&{\bf k}=$(1,3)'$&{\bf k}=$(3,1)'$\\
\hline
$(p,n)=(320,1000)$&1000&0 &0\\
$(p,n)=(160,500)$ &1000&0 &0\\
$(p,n)=(64,200)$  &984 &0 &16\\
$(p,n)=(32,100)$  &911 &0 &89\\
$(p,n)=(16,50)$   &865 &0 &135\\
\hline
\end{tabular}
\end{center}
\end{table}

\begin{table}
\begin{center}
\caption{Estimates for $H=0.5\delta_1+0.25\delta_3+0.125\delta_{15}+0.125\delta_{25}$ with $p=320$ and $n=1000$.}
\label{table6}
\begin{tabular}{|llccccc|}
\hline
&&$a_1 $&$w_1$&$a_2$&$w_2$&$a_3$\\
\hline
GLME      &Mean   &1.0015&0.5015&3.0089&0.2485& 15.0133\\
          &St. D. &0.0080&0.0043&0.0270&0.0043&\ 0.2243\\
LME       &Mean   &1.0003&-     &2.9996&-     & 15.0061\\
          &St. D. &0.0042&-     &0.0165&-     &\ 0.2267\\
\hline
&&$w_3$&$a_4$&$w_4$&$d$&Time\\
\hline
GLME      &Mean   &0.1265& 25.1109&0.1235&0.1188&0.817s\\
          &St. D. &0.0058&\ 0.3361&0.0058&0.0639&\\
LME       &Mean   &0.1262& 25.1058&0.1238&0.1074&0.820s\\
          &St. D. &0.0058&\ 0.3428&0.0058&0.0641&\\
\hline
\end{tabular}
\end{center}
\end{table}

\par

\vskip 0.4cm
\setcounter{chapter}{5}
\setcounter{equation}{0} 
\noindent {\bf 5. Conclusions and remarks}

This paper investigates the problem of estimating the population spectral distribution from the sample eigenvalues in large dimensional framework. A local moment estimation procedure is proposed, by considering the division of a discrete PSD $H$ according to the separation of the LSD $F$. The new estimates are easy to compute and are proved to be consistent.

Our estimation procedure can be seen as an extension of the method in \cite{M08a}. The extension mainly focus on two aspects: first, the asymptotic clusters of sample eigenvalues generated by different population eigenvalues are not necessarily separate, that is, we drop the splitting condition; second, we don't need to know the weight parameters beforehand. These improvements enable our approach to be applied successfully to more complex PSDs.

At last, the proposed method is more efficient than that in \cite{Bai2010}. This could be attributed to two facts: our estimator uses much lower moments of the PSD $H$ (the highest order of the moments is 2$\max k_i-1$ used in the LME while it is $2k-1$ used in the BCY); moreover, our estimator is localized, then more efficient by removing possible mixture effect brought by sample eigenvalues from different $H_i$'s.

\vskip 0.4cm
\renewcommand{\theequation}{A.\arabic{equation}}
\renewcommand{\thefigure}{A.\arabic{figure}}
\setcounter{chapter}{5}
\setcounter{figure}{0}
\setcounter{equation}{0} 
\noindent {\large\bf Appendix: Calculation of the contour integrals in Equation \eqref{eq3.3}}

The possible poles in \eqref{eq3.3} are sample eigenvalues and zeros of $\underline{s}_n(u)$ on the real line. Thus, the next step is to determine which poles fall within the $i$-th integration region $C_i$.

Let $v=\min\{p,n\}$ and $\lambda_1<\cdots<\lambda_v$ be the nonzero sample eigenvalues. According to the main theorems in \cite{BS99}, these sample eigenvalues should form $m$ separate clusters for all large $p$ and $n$. Thus, with probability one, the $i$-th cluster of sample eigenvalues, denoted by $A_i$, falls within $C_i$ for all large $p$ and $n$.

On the other hand, notice that $\underline{s}_n(u)=0$ is equivalent to $\sum_{i=1}^v\lambda_i/(\lambda_i-u)=n$ (except for $p/n=1$, where the second equation would have an additional zero solution). Let $\mu_1<\cdots<\mu_v$ be zeros of $\underline{s}_n(u)$ (define $\mu_1=0$ if $p/n=1$), we have then
$$\mu_1<\lambda_1<\mu_2\cdots<\mu_v<\lambda_v.$$
Let $B_i=\{\mu_i: \mu_i\neq0,\ \lambda_i\in A_i\}$ ($i=1,\ldots,m$). From the proof of Lemma 1 in \cite{M08a}, we know that, with probability one, $B_i$ falls within $C_i$ for all large $p$ and $n$. A representation of $A_i$'s, $B_i$'s, and $C_i$'s is shown in Figure \ref{fig2} for a simple case. In order to differentiate between  $A_i$'s and $B_i$'s, the elements of $A_i$'s are plotted on the line $y=0.05$ and those of $B_i$'s are plotted on the line $y=-0.05$.

\begin{figure}
\centering
\includegraphics[width=5in,height=1.5in]{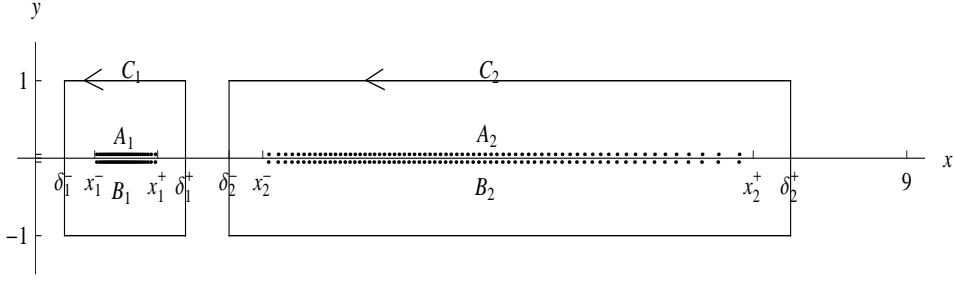}
\caption{Representation of $A_i$, $B_i$, and $C_i$ ($i=1,2$) where $H=0.3\delta_1+0.4\delta_4+0.3\delta_5$, $(c,p,n)=(0.1, 100,1000)$, and  $S_F=[x_1^-,x_1^+]\cup[x_2^-,x_2^+]=[0.6127, 1.2632]\cup[2.3484, 7.4137]$.}
\label{fig2}
\end{figure}

Therefore, the contour integral in \eqref{eq3.3} is formulated (approximately) as
\begin{equation}\label{eq5.1}
\frac{1}{{2\pi \rm i}}\oint_{C_i}\frac{z\underline{s}_n^\prime(z)}{\underline{s}_n^l(z)}dz
=\sum_{\lambda\in A_i}{\rm Res}(f_{ln}, \lambda)+\sum_{\mu\in B_i}{\rm Res}(f_{ln}, \mu),
\end{equation}
where $f_{ln}(z)=z\underline{s}_n^\prime(z)/\underline{s}_n^l(z)$.
The residues in \eqref{eq5.1} can be obtained by some elementary calculations. Residues from $A_i$ are simple:
$${\rm Res}(f_{ln}, \lambda)=-\lambda I(l=1).$$
Residues from $B_i$ are listed below for $l=1,\ldots,5$:
$$
{\rm Res}(f_{ln}, \mu)=
\begin{cases}
\mu &(l=1),\\
\frac{1}{\underline{s}_n^\prime(\mu)} &(l=2),\\
-\frac{\underline{s}_n^{\prime\prime}(\mu)}{2(\underline{s}_n^\prime(\mu))^3}&(l=3),\\
\frac{3(\underline{s}_n^{\prime\prime}(\mu))^2-\underline{s}_n^\prime(\mu)\underline{s}_n^{\prime\prime\prime}(\mu)}{6(\underline{s}_n^\prime(\mu))^5}&(l=4),\\
-\frac{15(\underline{s}_n^{\prime\prime}(\mu))^3-10\underline{s}_n^\prime(\mu)\underline{s}_n^{\prime\prime}(\mu)
\underline{s}_n^{\prime\prime\prime}(\mu)+(\underline{s}_n^\prime(\mu))^2\underline{s}_n^{(4)}(\mu)}{24(\underline{s}_n^\prime(\mu))^7}&(l=5).
\end{cases}
$$
For larger order $l$, we may get an analytic expression of ${\rm Res}(f_{ln}, \mu)$ from the following Mathematica code (here, the order of the moment is set to be 3):

\lstset{language=Mathematica}
\begin{lstlisting}
k = 3; * input the order of moment *
f = (z-mu)^k*z*D[sn[z],z]/(sn[z])^k;
D[f,{z,k-1}];
D[%*sn[z]^(2k-1),{z,2k-1}]/.z->mu;
D[sn[z],z]^(2k-1)(k-1)!(2k-1)!/.z->mu;
Simplify[%%/%,sn[mu]==0]
\end{lstlisting}

\vskip 0.4cm
\noindent {\large\bf Acknowledgment}

The authors would like to thank M. Yao Jianfeng (Telecom. ParisTech, Paris) for insightful discussions on the relationship \eqref{eq3.3} between moments of a PSD and the contour integrals. They are deeply grateful to the referees whose careful and detailed comments have led to many improvements of the manuscript.
\par

\vskip 0.4cm
\noindent{\large\bf References}
\begin{description}
\bibitem[Bai et~al.(2010)]{Bai2010}
\textsc{Bai, Z. D., Chen, J. Q. and Yao, J. F.} (2010).
On estimation of the population spectral distribution from a
high-dimensional sample covariance matrix.
\textit{Aust. N. Z. J. Stat.}
\textbf{52}, 423--437.

\bibitem[Bai and Silverstein(1998)]{BS98}
\textsc{Bai, Z. D. and Silverstein, J.} (1998).
No eigenvalues outside the support of the
limiting spectral distribution of large dimensional sample covariance
matrices.
\textit{ Ann. Probab.}
\textbf{26}, 316--345.

\bibitem[Bai and Silverstein(1999)]{BS99}
\textsc{Bai, Z. D. and Silverstein, J.} (1999).
Exact separation of eigenvalues of large dimensional
sample covariance matrices.
\textit{Ann. Probab.}
\textbf{27}, 1536--1555.


%

\bibitem[Chen et~al.(2011)]{CBY11}
\textsc{Chen, J.~Q., Delyon, B. and Yao, J.~F.} (2011).
On a Model Selection Problem from High-Dimensional Sample Covariance Matrices.
\textit{J. Multivariate Anal.}
\textbf{102}, 1388--1398.

\bibitem[El~Karoui(2008)]{KarE08}
\textsc{El Karoui, N.} (2008).
Spectrum estimation for large dimensional covariance matrices using
random matrix theory.
\textit{Ann. Statist.}
\textbf{36}, 2757--2790.

\bibitem[Hachem~et al.(2011)]{Hachem11}
\textsc{Hachem, W., Loubaton, P., Mestre, X., Najim, J. and Vallet, P.} (2011).
Large information plus noise random matrix models and
consistent subspace estimation in large sensor networks.
ArXiv:1106.5119v1.

\bibitem[Johnstone(2001)]{Johnstone01}
\textsc{Johnstone I.} (2001).
On the Distribution of the Largest Eigenvalue in Principal Components Analysis.
\textit{Ann. Statist.}
\textbf{29}, 295--327.

\bibitem[Li et~al.(2012)]{Li11}
\textsc{Li, W. M., Chen, J. Q., Qin, Y. L., Yao, J. F. and Bai, Z. D.} (2012).
Estimation of the population spectral distribution from a large dimensional sample covariance matrix.
\textit{Submitted.}

\bibitem[Mar{\v{c}}enko and Pastur(1967)]{MP67}
\textsc{Mar{\v{c}}enko, V. A. and Pastur, L. A.} (1967).
Distribution of eigenvalues in certain sets of random matrices.
\textit{Mat. Sb. (N.S.)}
\textbf{72}, 507--536.

\bibitem[Mestre(2008)]{M08a}
{\sc Mestre, X.} (2008).
\newblock Improved estimation of eigenvalues and eigenvectors
of covariance matrices using their sample estimates.
\newblock {\em IEEE Trans. Inform. Theory} {\bf 54}, 5113--5129.

\bibitem[Rao et al.(2008)]{RaoJ08}
\textsc{Rao, N. R., Mingo, J. A., Speicher, R. and Edelman, A.} (2008).
Statistical eigen-inference from large Wishart matrices.
\textit{Ann. Statist.}
\textbf{36}, 2850--2885.

\bibitem[Silverstein(1995)]{Silverstein95}
\textsc{Silverstein, J. W.} (1995).
Strong convergence of the empirical distribution of eigenvalues of
large-dimensional random matrices.
\textit{J. Multivariate Anal.}
\textbf{55}, 331--339.

\bibitem[Silverstein and Bai(1995)]{SilversteinB95}
\textsc{Silverstein, J. W. and Bai, Z. D.} (1995).
On the empirical distribution of eigenvalues of a class of
large-dimensional random matrices.
\textit{J. Multivariate Anal.}
\textbf{54}, 175--192.

\bibitem[Silverstein and Choi(1995)]{SilversteinC95}
\textsc{Silverstein, J. W. and Choi, S. I.} (1995).
Analysis of the limiting spectral distribution of large-dimensional
random matrices.
\textit{J. Multivariate Anal.}
\textbf{54}, 295--309.

%
%

\end{description}



\end{document}